
\documentclass[sigconf, nonacm]{acmart}

\usepackage{color}  
\usepackage{hyperref}
\usepackage{balance}
\usepackage{graphicx}
\usepackage{subfigure}
\usepackage[graphicx]{realboxes}
\usepackage{balance}
\usepackage[ruled, vlined, linesnumbered]{algorithm2e}
\usepackage{array}
\usepackage{multirow}
\usepackage[normalem]{ulem}
\useunder{\uline}{\ul}{}
\usepackage{booktabs}
\usepackage{enumitem}
\usepackage{amsmath}
\usepackage{amsthm} 

\newtheorem{theorem}{Theorem}
\newtheorem{example}{Example}
\newtheorem{definition}{Definition}
\newtheorem{proposition}{Proposition}


\newcommand\vldbdoi{XX.XX/XXX.XX}
\newcommand\vldbpages{XXX-XXX}
\newcommand\vldbvolume{16}
\newcommand\vldbissue{1}
\newcommand\vldbyear{2022}
\newcommand\vldbauthors{\authors}
\newcommand\vldbtitle{\shorttitle} 
\newcommand\vldbavailabilityurl{https://github.com/Yocenly/TNCP}
\newcommand\vldbpagestyle{plain} 

\begin{document}
\title{Targeted \texorpdfstring{$k$}--node Collapse Problem: Towards Understanding the Robustness of Local \texorpdfstring{$k$}--core Structure}

\author{Yuqian Lv}
\affiliation{%
  \institution{Zhejiang University of Technology}
}
\email{lvyuqian_email@163.com}

\author{Bo Zhou}
\affiliation{%
  \institution{Zhejiang University of Technology}
}
\email{wxjs201@163.com}

\author{Jinhuan Wang}
\affiliation{%
  \institution{Zhejiang University of Technology}
}
\email{jhwang@zjut.edu.cn}

\author{Qi Xuan}
\affiliation{
  \institution{Zhejiang University of Technology}
}
\email{xuanqi@zjut.edu.cn}





\begin{abstract}
  The concept of \texorpdfstring{$k$}--core, which indicates the largest induced subgraph where each node has \texorpdfstring{$k$}- or more neighbors, plays a significant role in measuring the cohesiveness and the engagement of a network, and it is exploited in diverse applications, e.g., network analysis, anomaly detection, community detection, etc. Recent works have demonstrated the vulnerability of \texorpdfstring{$k$}--core under malicious perturbations which focuses on removing the minimal number of edges to make a whole \texorpdfstring{$k$}--core structure collapse. However, to the best of our knowledge, there is no existing research concentrating on how many edges should be removed at least to make an arbitrary node in \texorpdfstring{$k$}--core collapse. Therefore, in this paper, we make the first attempt to study the Targeted \texorpdfstring{$k$}--node Collapse Problem (TNCP) with four novel contributions. Firstly, we offer the general definition of TNCP problem with the proof of its NP-hardness. Secondly, in order to address the TNCP problem, we propose a heuristic algorithm named TNC and its improved version named ATNC for implementations on large-scale networks. After that, the experiments on \texorpdfstring{$16$}- real-world networks across various domains verify the superiority of our proposed algorithms over $4$ baseline methods along with detailed comparisons and analyses. Finally, the significance of TNCP problem for precisely evaluating the resilience of \texorpdfstring{$k$}--core structures in networks is validated.
\end{abstract}

\maketitle

\pagestyle{\vldbpagestyle}
\begingroup\small\noindent\raggedright\textbf{PVLDB Reference Format:}\\
\vldbauthors. \vldbtitle. PVLDB, \vldbvolume(\vldbissue): \vldbpages, \vldbyear.\\
\href{https://doi.org/\vldbdoi}{doi:\vldbdoi}
\endgroup
\begingroup
\renewcommand\thefootnote{}\footnote{\noindent
This work is licensed under the Creative Commons BY-NC-ND 4.0 International License. Visit \url{https://creativecommons.org/licenses/by-nc-nd/4.0/} to view a copy of this license. For any use beyond those covered by this license, obtain permission by emailing \href{mailto:info@vldb.org}{info@vldb.org}. Copyright is held by the owner/author(s). Publication rights licensed to the VLDB Endowment. \\
\raggedright Proceedings of the VLDB Endowment, Vol. \vldbvolume, No. \vldbissue\ %
ISSN 2150-8097. \\
\href{https://doi.org/\vldbdoi}{doi:\vldbdoi} \\
}\addtocounter{footnote}{-1}\endgroup

\ifdefempty{\vldbavailabilityurl}{}{
\vspace{.3cm}
\begingroup\small\noindent\raggedright\textbf{PVLDB Artifact Availability:}\\
The source code, data, and/or other artifacts have been made available at \url{\vldbavailabilityurl}.
\endgroup
}

\section{Introduction}

\begin{figure}[t]
  \centering
	\includegraphics[width=0.8\linewidth]{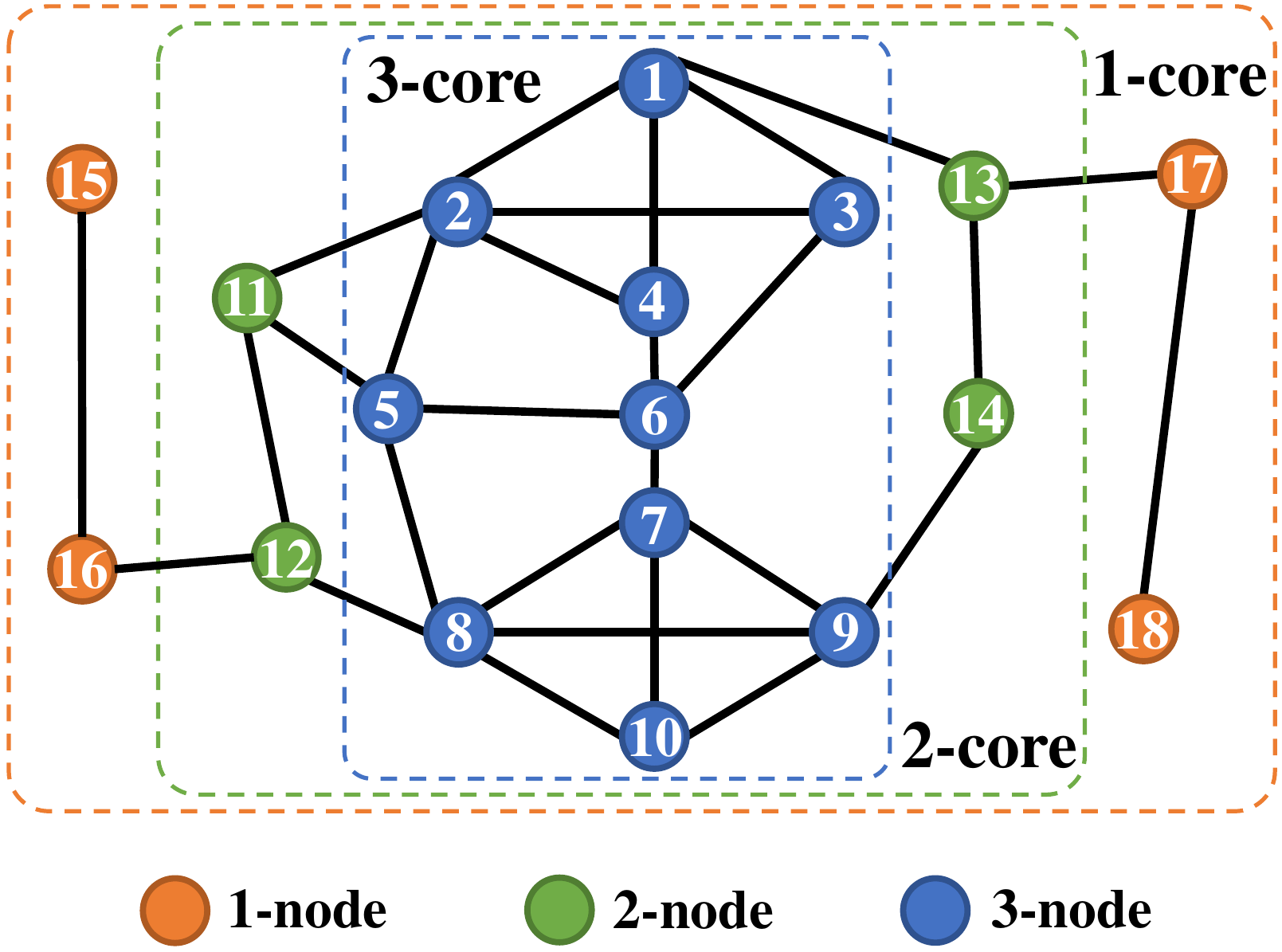}
  \caption{An example of $k$-core distribution in a graph. Each subgraph in the dotted box with a certain color represents the $k$-core and the color of each node represents its core value.}
  \label{fig: k_core}
\end{figure}

Networks or graphs play significant roles in describing various complex systems from numerous domains, e.g., social networks\cite{majeed2020graph,wang2015community,liu2017experimental,girvan2002community}, citation networks\cite{liu2019link,hummon1989connectivity}, biological networks\cite{koutrouli2020guide,yu2013review,girvan2002community} and power networks\cite{pasqualetti2011graph,akinyele2014review}. Therefore, understanding the topological information of graphs is what matters in the study of graph theory. Due to the advantages of simplicity and efficiency \cite{kong2019k}, the concept of $k$-core, which denotes the maximal induced subgraph where each node within it occupies at least $k$ neighbors \cite{dorogovtsev2006k}, has stood out as an important metric for describing the global structural engagement of networks from massive evaluation metrics. As shown in Figure \ref{fig: k_core}, an example graph with $18$ nodes and $28$ edges is given where $3$ cores exist, i.e., $1$-core, $2$-core and $3$-core which are surrounded by dotted boxes with different colors. As more and more researchers devoted themselves to the study of $k$-core, $k$-core has been used in a broad variety of important applications \cite{malliaros2020core}. For example, in ecological networks, Morone et al. \cite{morone2019k} exploited the $k$-core as a predictor to estimate the structural collapse in mutualistic ecosystems, and Burleson-Lesser et al. \cite{burleson2020k} presented a new approach for characterizing the stability and robustness of networks with all-positive interactions by studying the distribution of the $k$-core of the underlying network. Besides, in social networks, Wang et al. \cite{wang2021vulnerability} considered the pruning process of $k$-core to measure the vulnerability and resilience of social engagement and further studied its equilibrium statistical mechanics. And in biological networks, Luo et al. \cite{luo2009core} studied the core structure of protein-protein interactions networks with an interesting discovery that core structures help to reveal the existence of multiple levels of protein expression dynamics, and Isaac et al. \cite{isaac2015analysis} discovered that residues belonging to inner cores are more conserved than those at the periphery of the network with the evidence that these groups are functionally and structurally critical.

With the rapidly increasing number of applications based on $k$-core structures, the robustness (or also called resilience) of $k$-core have gradually attracted the attention of researchers. For instance, Zhou et al.\cite{zhou2021robustness} studied the robustness of $k$-shell, a subset of $k$-core, and demonstrated that $k$-shell is vulnerable under the disturbance of edge rewiring. Their optimal-based experimental results showed that the $k$-core distributions of graphs can be drastically changed even a small proportion of edges are rewired. Zhou et al.\cite{zhou2022attacking} also studied the minimal budgets of removed edges for the collapse of the innermost $k$-core. They provided a proof of its NP-hardness and offered effective heuristic algorithms to cover this problem. Furthermore, Chen et al.\cite{chen2021edge} focused on the $k$-core minimization problem and suggested three sub-problems, i.e., KNM, KEM and KCM. They further proposed several heuristic algorithms through edge removal to cover these sub-problems respectively. Medya et al. \cite{ijcai2020p480} also concentrated on the $k$-core minimization problem and proposed a novel algorithm inspired by shapley value, a cooperative game-theoretic concept. Their algorithm could leverage the strong interdependencies in the effects of edges removal in the search space. Besides, Zhang et al.\cite{zhang2017finding} studied the collapsed $k$-core problem which aims to find a set of nodes whose detachment will lead to the minimal size of the resulting collapsed $k$-core.

However, as we can see, throughout the previous works, all of them considered the $k$-core as a whole to evaluate its robustness, while none of them focused on the robustness of an individual node within $k$-core. Thus it brings us a question that how many edges should we disconnect at least to make an arbitrary node contained in $k$-core collapse? As far as we know, there is no existing work dedicating to the study of this problem. In this paper, we make the first attempt to study this problem and name it as Targeted $k$-node Collapse Problem (TNCP). Our main contributions can be summarized as below.

\begin{itemize}[itemsep= 0 pt, topsep = 0 pt, itemindent = 0 pt, leftmargin = 15 pt]
  \item We offer a general definition of TNCP problem with a proof of its NP-hardness. We demonstrate that the naive exhaustive method will lead to the exponential explosion of the time complexity. Therefore, a series of theorems are provided to narrow down the search space of candidates.
  \item Combined with the theorems, we propose a heuristic algorithm named TNC to cover the TNCP problem. However, we find that TNC algorithm is not suitable for large-scale networks. Thus, an improved algorithm named ATNC with less time complexity is proposed based on TNC algorithm.
  \item We verify the superiority of our proposed algorithms over $4$ baseline methods through experiments on $16$ real-world networks collected from different public platforms along with detailed comparisons and analyses.
  \item We demonstrate that the research of TNCP problem is helpful for precisely evaluating the resilience of the $k$-core structures in networks.
\end{itemize}

The remaining sections of this paper are structured as follows. In Section \ref{sec: relatedworks}, a brief review on the previous works about $k$-core is illustrated. In Section \ref{sec: problemstatement}, the statement of TNCP problem and basic definitions, which will be used in the rest of this paper, are introduced along with the theorems for candidate reduction. In Section \ref{sec: methodology}, we introduce our proposed methods TNC and ATNC with their time complexity analyses. In Section \ref{sec: experiment}, we give the introductions about the datasets being used, the baseline methods for comparisons and the metrics for evaluations. In Section \ref{sec: results}, experimental results on all mentioned datasets are shown along with detailed comparisons and analyses between our proposed algorithms and the baseline methods. In Section \ref{sec: application}, the significance of TNCP problem for precisely evaluating the resilience of $k$-core in networks is validated. Finally, our work is concluded in Section \ref{sec: conclusion}.

\section{Related Works}
\label{sec: relatedworks}
The researches on the $k$-core structure of networks have been enduring, and those most related to our work are introduced as below, including core decomposition, core robustness/resilience, and core percolation.

\textbf{Core Decomposition.} Hajnal et al. \cite{hajnal1966chromatic} gave the first $k$-core related concept and defined the degeneracy of a graph as the maximum core number of a node. Then, Seidman \cite{seidman1983network}, as well as Matula and Beck \cite{matula1983smallest}, defined the $k$-core subgraph as the maximal connected subgraph where each node has at least $k$ neighbors. Khaouid et al. \cite{khaouid2015k} explored whether $k$-core decomposition of large networks can be computed using a consumer-grade PC. Sariy{\"u}ce et al. \cite{sariyuce2013streaming} proposed the first incremental $k$-core decomposition algorithms for streaming graph data. H{\'e}bert-Dufresne et al. \cite{hebert2016multi} proposed onion decomposition which is derived from $k$-core decomposition. Eidsaa and Almaas \cite{eidsaa2013s} presented $s$-core analysis, a generalization of $k$-core analysis, for weighted networks.

\textbf{Core Robustness/Resilience.} In addition to works mentioned in the last section, Adiga and Vullikanti \cite{adiga2013robust} examined the robustness of the top core sets in perturbed/sampled graphs. Zdeborov{\'a} et al. \cite{zdeborova2016fast} used $k$-core as a heuristic tool in the process of graph decycling and dismantling. Laishram et al. \cite{laishram2018measuring} proposed metrics for measuring the core resilience of a network under the situations of node/edge removals.

\textbf{Core Percolation.} Azimi-Tafreshi et al. \cite{azimi2014k} generalized the theory of $k$-core percolation on complex networks to \textbf{k}-core percolation on multiplex networks, where $\textbf{k} = (k_a,k_b,...)$. Whi et al. \cite{whi2022characteristic} revealed the hierarchical structure of functional connectivity on resting-state fMRI (rsfMRI) through the method of $k$-core percolation. Wang et al. \cite{wang2022generalized} proposed a generalized $k$-core percolation model to investigate the robustness of the higher-order dependent networks. Zheng et al. \cite{zheng2021k} studied the robustness of multiplex networks with interdependent and interconnected links under $k$-core percolation. Guo et al. \cite{guo2021percolation} applied $k$-core percolation analysis on brain structural network, suggesting that the brain networks are mostly reliable against random or $k$-core-based percolation with their structure design.

\section{Problem Statement}
\label{sec: problemstatement}
In this section, the descriptions of commonly used definitions and fundamental concepts will be discussed in the following contents along with the statement of TNCP problem and the proofs of our proposed theorems.

\subsection{Preliminaries}
\label{sec: preliminaries}
In this paper, a network or a graph (these two concepts will be used indiscriminately) is indicated as $G=(V,E)$, where $V$ and $E\subseteq (V\times V)$ represent the sets of nodes and edges respectively, which are extracted from real-world entities and the relationships between any pair of entities. As a prerequisite, we only focus on those unweighted and undirected graphs without self-loops or isolated nodes. Here, we present some fundamental definitions and related concepts which are relevant to the subsequent discussions. In Table \ref{tab: notations}, we compile a list of principal symbols and notations for convenient query. 

\begin{table}[htbp]
	\caption{Summary of notations.}
	\label{tab: notations}
	\begin{tabular*}{\hsize}{@{}@{\extracolsep{\fill}}lr@{}}
		\toprule[0.5mm]
		Notation    &Definition\\
		\midrule
      $G_k$  & the $k$-core subgraph of $G$     \\
      $d_{(i,G_k)}$ & the degree of node $i$ in $G_k$\\
      $C_{(i,G)}$ & the core value of node $i$ \\
      $SN_{(i,k,G)}$ & the supportive neighbors of the node $i$ in $G_k$\\
      $SN_{(i,G)}$ & the simplification of $SN_{(i,C_{(i,G)},G)}$\\
      $\mathcal{N}_{(i,G_k)}$ & the one-hop neighbors of node $i$ in $G_k$\\
      $CS_{(i,G)}$ & core strength of node $i$\\
      $NR_{(i,G)}$ & node robustness of node $i$\\
      $P_{(i,G)}$ & the corona pedigree of node $i$\\
      $E^P_{(i,G)}$ & those edges connected with nodes in $P_{(i,G)}$\\
		\bottomrule[0.5mm]
	\end{tabular*}
\end{table}

\begin{definition}
  \label{def: k-core}
  \textbf{$k$-core.} For a given graph $G$, its $k$-core, denoted as $G_k=(V_k, E_k)$ where $V_k\subseteq V$ and $E_k\subseteq E$, means the maximal induced subgraph whose nodes occupy at least $k$ neighbors within $G_k$, i.e., $\forall i \in V_k, d_{(i,G_k)}\geq k$, where $d_{(i,G_k)}$ is the degree of $i$ in $G_k$.
\end{definition}
\begin{definition}
  \label{def: corevalue}
  \textbf{Core Value of Node.} With the concept of $k$-core, we can also describe the core value of a given node $i$ within $G$ by $C_{(i,G)}$, which represents the maximum core value of the $k$-core where node $i$ exists, i.e., $C_{(i,G)}$ satisfies that $i\in G_{C_{(i,G)}}$ but $i\notin G_{C_{(i,G)}+1}$. The nodes whose core values are equal to $k$ are named as $k$-nodes. 
\end{definition}

\begin{figure}[ht]
  \centering
  \includegraphics[width=0.8\linewidth]{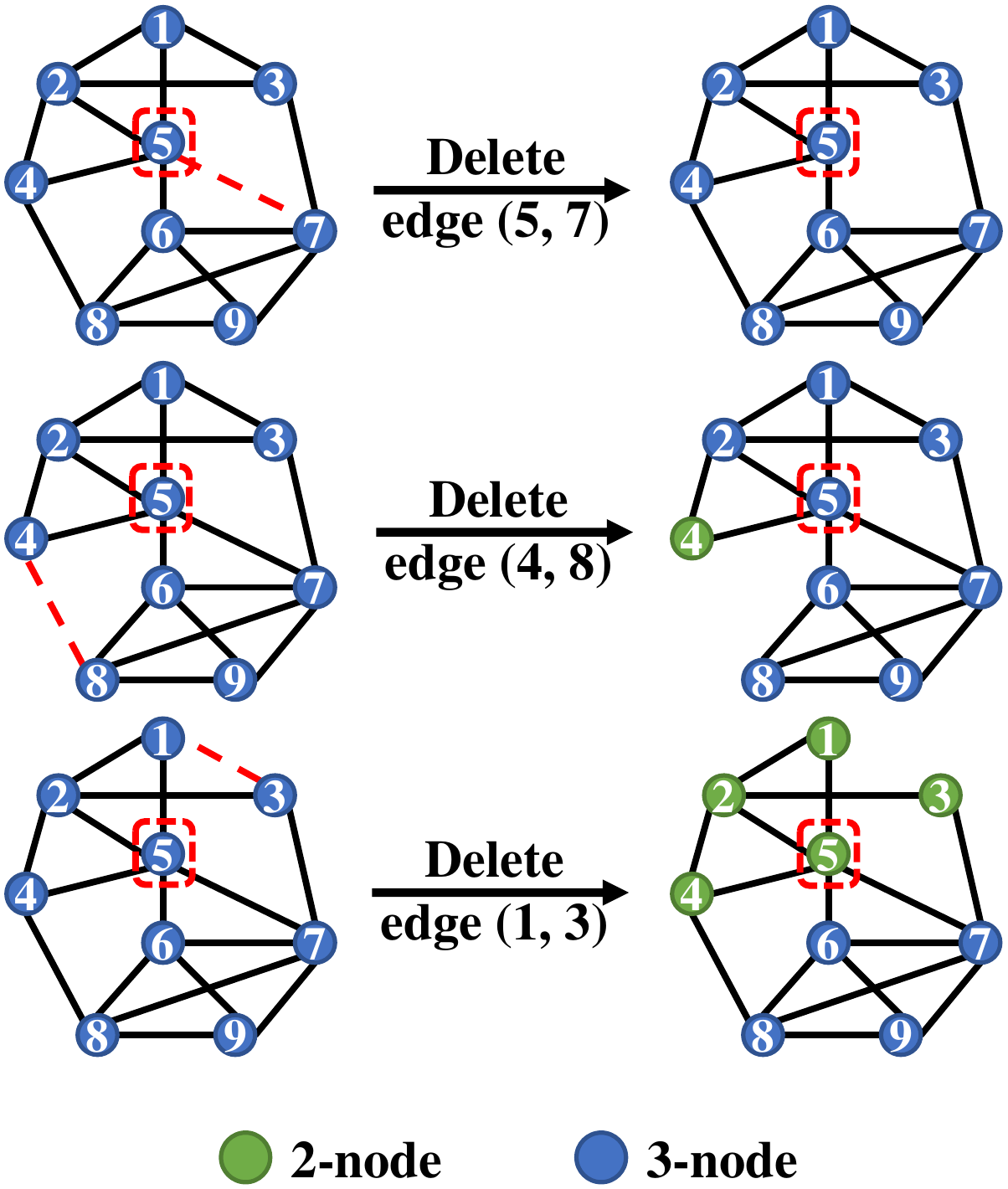}
  \caption{Given a graph with $9$ nodes and $17$ edges, we can find that all nodes stay in $3$-core. Take node $5$ as our target node, (i) the removal of edge $(5, 7)$ does not make any effect to the $k$-core distribution; (ii) the removal of edge $(4, 8)$ makes node $4$ being squeezed out of $3$-core while makes no effect to node $5$; (iii) the removal of edge $(1, 3)$ makes the target node $5$ being squeezed out of $3$-core.}
  \label{fig: deletion}
\end{figure}

In accordance with Definition \ref{def: k-core}, the existence of a given node $i$ within $G_k$ relies on its neighbor nodes who overlap with $G_k$. We can also realize that those neighbors with core values less than $k$ are not included in $G_k$. Be a result, those neighbors helping support the existence of $i$ in $G_k$ are referred to as Supportive Neighbors of $i$ which is recorded as $SN_{(i,k,G)}=\{j|j\in \mathcal{N}_{(i,G)}, C_{(j,G)}\geq k\}$, where $\mathcal{N}_{(i,G)}$ represents the one-hop neighbors of $i$ within $G$. In this way, the following theorem could be deduced.

\begin{theorem}
  \label{the: coresupport}
  \textbf{Core Support Condition.} Node $i$ can remain in $G_k$ if and only if it satisfies $|SN_{(i,k,G)}|\geq k$; otherwise, it will be squeezed out of $G_k$. 
\end{theorem}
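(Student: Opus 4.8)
The plan is to recast the statement as a clean membership characterization and prove the two implications separately. By Definition~\ref{def: corevalue} a vertex $j$ belongs to $G_k$ exactly when $C_{(j,G)}\geq k$, so the supportive set $SN_{(i,k,G)}$ is nothing but the collection of neighbors of $i$ that lie in $G_k$. Since $G_k$ is an \emph{induced} subgraph, it retains every edge joining $i$ to such a neighbor; hence, whenever $i$ itself sits in $G_k$, the count $|SN_{(i,k,G)}|$ coincides with the in-core degree $d_{(i,G_k)}$. With this identity in hand the theorem reduces to the equivalence $i\in G_k \iff |SN_{(i,k,G)}|\geq k$.

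For necessity I would invoke Definition~\ref{def: k-core} directly: if $i\in G_k$ then $d_{(i,G_k)}\geq k$, and by the identity above $|SN_{(i,k,G)}| = d_{(i,G_k)}\geq k$. For sufficiency, suppose $|SN_{(i,k,G)}|\geq k$ and let $H$ be the subgraph of $G$ induced on the vertex set $V_k\cup\{i\}$. I would check that $H$ has minimum degree at least $k$: every $j\in V_k$ keeps at least its $G_k$-degree, since adjoining $i$ can only add incident edges, so its degree in $H$ is $\geq k$; and the degree of $i$ in $H$ equals the number of its neighbors lying in $V_k$, which is precisely $|SN_{(i,k,G)}|\geq k$.

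The crux — and the step I expect to be the main obstacle — is converting ``$H$ has minimum degree $\geq k$'' into ``$i\in G_k$''. This leans on the standard monotonicity property of cores: if two induced subgraphs each have minimum degree $\geq k$, then the subgraph induced on the union of their vertex sets also has minimum degree $\geq k$, because each vertex keeps the degree it already had in whichever part contains it. This property is exactly what makes the maximal subgraph of Definition~\ref{def: k-core} unique and guarantees that \emph{every} induced subgraph of minimum degree $\geq k$ is contained in $G_k$. Applying it to $H$ forces $V_k\cup\{i\}\subseteq V_k$, i.e.\ $i\in V_k=G_k$. I would state this containment property as a one-line lemma before the sufficiency argument, since it is the only nontrivial ingredient; the ``otherwise squeezed out of $G_k$'' clause then follows as the contrapositive, because $|SN_{(i,k,G)}|<k$ precludes membership in $G_k$.
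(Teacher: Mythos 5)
Your proof is correct, and it is in fact more complete than the one in the paper. Both arguments share the same backbone: you and the paper identify $SN_{(i,k,G)}=\mathcal{N}_{(i,G)}\cap V_k$, so that for $i\in G_k$ the chain $|SN_{(i,k,G)}|=|\mathcal{N}_{(i,G_k)}|=d_{(i,G_k)}\geq k$ settles necessity via Definition~\ref{def: k-core}. The difference is the converse. The paper's proof stops after this degree computation and simply declares the biconditional, never arguing that $|SN_{(i,k,G)}|\geq k$ forces $i\in G_k$ — as written, it establishes only the ``only if'' half. You supply exactly the missing step: you form the induced subgraph $H$ on $V_k\cup\{i\}$, check that its minimum degree is at least $k$ (each $j\in V_k$ retains at least its $G_k$-degree, and the degree of $i$ in $H$ equals $|SN_{(i,k,G)}|\geq k$), and then invoke the closure property that any induced subgraph of minimum degree $\geq k$ is contained in the maximal one, so $V_k\cup\{i\}\subseteq V_k$. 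That lemma is the right tool, and it earns its keep twice more: it is what makes the ``maximal'' subgraph of Definition~\ref{def: k-core} unique in the first place, and it yields the nestedness $G_{k+1}\subseteq G_k$ that you tacitly use when translating Definition~\ref{def: corevalue} into the membership criterion $j\in G_k \iff C_{(j,G)}\geq k$; stating it once up front, as you propose, covers all three uses. In short, your route coincides with the paper's on the easy direction but buys genuine rigor on the sufficiency direction that the paper asserts without proof.
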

\begin{proof}
  According to the definition of supportive neighbors of node $i$, $SN_{(i,k,G)}$ actually denotes the intersection of $\mathcal{N}_{(i,G)}$ and $V_k$. Based on Definition \ref{def: k-core}, it is clear that only the satisfaction of $d_{(i,G_k)}\geq k$ can remain the existence of $i$ in $G_k$. In this way, if node $i\in G_k$, there is $|SN_{(i,k,G)}|=|\mathcal{N}_{(i,G)}\cap V_k|=|\mathcal{N}_{(i,G_k)}|=d_{(i,G_k)}\geq k$, which shows that node $i$ could be contained in $G_k$ if and only if at least $k$ neighbors whose core values are not less than $k$ are connected with it.
\end{proof}
\begin{example}
  As illustrated in Figure \ref{fig: k_core}, for node $13$ who lives in $G_2$, it has $SN_{(13,2,G)}=2$ which allows it to satisfy Theorem \ref{the: coresupport}, while it has $SN_{(13,3,G)}=1<3$ so that it cannot exist in $G_3$.
\end{example}

Theorem \ref{the: coresupport} provides us with a sufficient and necessary condition to determine whether a certain node exists in $G_k$. Derived from this, Laishram et al.\cite{laishram2018measuring} exploited a naive and easily-computed metric called Core Strength to measure the most conservative number of disconnected neighbors of node $i$ for squeezing $i$ out of $G_{C_{(i,G)}}$, which is formulated as
\begin{equation}
  CS_{(i,G)} = |SN_{(i,C_{(i,G)},G)}| - C_{(i,G)} + 1.
\end{equation}

This metric describes that if any $CS_{(i,G)}$ of supportive neighbors are disconnected with the target node $i$, it will absolutely be in violation of Theorem \ref{the: coresupport} and be squeezed out of $G_{C_{(i,G)}}$. For instance, as shown in Figure \ref{fig: deletion}, we set node $5\in G_3$ as the target node. That is easy to find that the target node has $5$ supportive neighbors $SN_{(5,3,G)}=\{1,2,4,6,7\}$ and core strength $CS_{(5,G)}=3$. We arbitrarily select $3$ supportive neighbors to disconnect, e.g. $\{4, 6, 7\}$, then the number of its supportive neighbors will be reduced to $2$ which is against what Theorem \ref{the: coresupport} restricts. Please notice that in the rest of this paper, if $k=C_{(i,G)}$, we will use $SN_{(i,G)}$ instead of $SN_{(i,k,G)}$ for the sake of simplicity.

\subsection{Problem Definition}


As mentioned in the aforementioned contents, the core strength metric describes the most conservative number of edges we should disconnect for target-node collapse. Because of so-called cascade phenomenon or domino phenomenon of $k$-core collapse \cite{goltsev2006k}, however, this metric cannot estimate the exact number of edges that must be deleted which may be less than that quantified by core strength. As an illustration, let us turn our sights back to Figure \ref{fig: deletion}, the deletion of edge $(1,3)$ will practically make node $5$ with $CS_{(5,G)}=3$ collapse from $G_3$ to $G_2$. From here, we can derive the problem named Targeted $k$-node Collapse Problem (TNCP) aiming to quantify the minimal number of edges to remove for downgrading the core value of a target node.

\begin{proposition}
  For a given $G$ and a target node $i\in V$ with $C_{(i,G)}=k$, TNCP problem aims to find a set $e\subseteq E$ containing the least number of edges such that $C_{(i,G^\prime)} < C_{(i,G)}$, where $G^\prime = (V, E\setminus e)$, and can be formulated as:
  \begin{equation}
    \begin{aligned}
      & e^{*}=\arg \min_{e} \left |e\right |, \\& s.t. C_{(i,G^\prime)} < C_{(i,G)}.
    \end{aligned}
  \end{equation}
\end{proposition}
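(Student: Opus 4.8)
The statement is a problem \emph{formulation} rather than a conventional claim, so the substantive content to establish is that the optimization is well posed: that the feasible region is non-empty, that the minimum is actually attained, and that the formal program faithfully encodes the verbal goal of deleting the fewest edges so that the core value of node $i$ strictly drops. The plan is to verify these three points in turn, writing $k=C_{(i,G)}$ and $G^\prime=(V,E\setminus e)$ throughout.

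First I would establish feasibility, i.e. that $\mathcal{F}=\{\,e\subseteq E : C_{(i,G^\prime)}<C_{(i,G)}\,\}$ is non-empty. Since $i\in G_k$ we have $d_{(i,G)}\geq d_{(i,G_k)}\geq k$, so the edge set $e_0$ consisting of all but $k-1$ of the edges incident to $i$ is well defined, with $|e_0|=d_{(i,G)}-(k-1)$. After removing $e_0$ the node $i$ has total degree $d_{(i,G^\prime)}=k-1$, hence at most $k-1$ supportive neighbours in any $k$-core; by Theorem \ref{the: coresupport} it is squeezed out, so $C_{(i,G^\prime)}\leq k-1<k$ and $e_0\in\mathcal{F}$. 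A tighter witness comes from core strength: deleting any $CS_{(i,G)}=|SN_{(i,G)}|-k+1$ edges joining $i$ to its supportive neighbours already violates the Core Support Condition, so in fact $|e^{*}|\leq CS_{(i,G)}$, which both confirms $\mathcal{F}\neq\emptyset$ and bounds the optimum.

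Next I would argue attainment and equivalence. Because $\mathcal{F}\subseteq 2^{E}$ and $E$ is finite, $\mathcal{F}$ is finite; the set of cardinalities $\{\,|e| : e\in\mathcal{F}\,\}$ is then a non-empty finite subset of the non-negative integers and therefore has a least element, realised by some $e^{*}\in\mathcal{F}$, so $\arg\min$ is meaningful and returns at least one optimal deletion set. The correspondence with the informal description is then immediate: the constraint $C_{(i,G^\prime)}<C_{(i,G)}$ is exactly the event that $i$ collapses to a strictly lower core, and minimizing $|e|$ is exactly ``removing the fewest edges,'' so the program and the verbal statement coincide. The only step that is not pure bookkeeping is feasibility, and even there the sole obstacle is to exhibit one concrete collapsing edge set and invoke Theorem \ref{the: coresupport}; the degree-reduction construction (or the core-strength bound) settles this directly, after which attainment and equivalence follow routinely.
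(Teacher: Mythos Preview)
Your proposal is correct, but there is nothing in the paper to compare it against: the proposition is stated purely as a problem \emph{definition} and the paper offers no accompanying proof or well-posedness discussion. The text simply introduces the optimization, names $|e^{*}|$ the Node Robustness, and moves on to the NP-hardness theorem.

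What you have supplied---non-emptiness of the feasible set via a concrete edge-deletion witness (and the sharper bound $|e^{*}|\leq CS_{(i,G)}$), attainment by finiteness of $2^{E}$, and the semantic match between the program and the verbal goal---is a genuine well-posedness argument that the paper omits. In that sense your write-up is more careful than the original; the only remark is that for a pure problem-formulation proposition such as this one, most authors (including these) treat it as not requiring proof at all.
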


The minimal size of $e$ is named as Node Robustness which displays the fewest number of removed edges for the collapse of the target node under elaborate perturbations and is recorded as $NR_{(i,G)}=e^{*}$. Furthermore, those nodes whose core strengths are larger than their node robustness are referred to as Bubble Nodes which are recorded as $BN=\{i|i\in V, CS_{(i,G)}>NR_{(i,G)}\}$.

\begin{theorem}
  The TNCP problem is NP-hard for $C_{(i,G)}\geq 2$.
\end{theorem}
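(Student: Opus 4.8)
The plan is to establish NP-hardness through a polynomial-time reduction from the decision version of the classical \textbf{Minimum Set Cover} problem: given a universe $U=\{1,\dots,n\}$, a family $\mathcal{S}=\{S_1,\dots,S_m\}$ with $\bigcup_j S_j=U$, and a budget $b$, decide whether some $b$ of the sets already cover $U$. From such an instance I would construct, in time polynomial in $n+m$, a graph $G$ together with a target node $i$ whose core value is a fixed $k\ge 2$, arranged so that $i$ can be squeezed out of $G_k$ by deleting at most $b$ edges if and only if $U$ admits a cover of size $b$; equivalently, $NR_{(i,G)}$ would equal the optimum cover size. Since verifying any candidate edge set $e$ requires only a single core decomposition, which is polynomial by the peeling characterization behind Theorem~\ref{the: coresupport}, the decision version of TNCP also lies in NP, so the entire content of the theorem is the reduction itself.

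The gadget I would use has three layers. For every element $\ell\in U$ I create an \emph{element gadget} that initially sits in $G_k$ but is designed to be squeezed out of $G_k$ as soon as one entire \emph{support cluster} attached to it collapses; by Theorem~\ref{the: coresupport} this is arranged by tuning how many of its supportive neighbors lie in the cluster relative to $k$. For every set $S_j$ I create a \emph{set node} $w_j$ held in $G_k$ with core strength $1$ through a single private ``trigger'' edge, and I wire $w_j$, via the clusters, to the element gadgets of exactly those $\ell\in S_j$, so that the departure of $w_j$ from $G_k$ cascades and strips the corresponding clusters, removing precisely the elements covered by $S_j$. Finally the target $i$ is attached to all $n$ element gadgets together with enough auxiliary backbone support that $C_{(i,G)}=k$ and that $i$ leaves $G_k$ exactly when all element gadgets have been removed. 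Under this construction, deleting the one trigger edge of $w_j$ costs a single edge and, through the induced cascade, knocks out every element of $S_j$; hence a collapsing edge set of size $b$ corresponds to $b$ activated sets whose union is $U$, and conversely a cover of size $b$ yields $b$ trigger deletions that collapse $i$.

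The hard part is not this forward correspondence but proving that \emph{no cheaper, non-set-aligned attack exists}: the adversary may delete arbitrary edges and exploit unintended cascades, so the gadgets must be calibrated to be ``thick'' enough that directly attacking an element gadget, a backbone support node, or an edge incident to $i$ is never cheaper than triggering whole sets. Concretely, I would pad each element gadget's clusters and the backbone with high-multiplicity support so that their individual core strengths far exceed $b$, forcing any economical solution to route its deletions through the cost-$1$ triggers, and then argue that every optimal edge set can be transformed, without increasing its cardinality, into one consisting solely of trigger edges whose induced element removals must cover $U$. The delicate point throughout is simultaneously guaranteeing that (i) the whole construction starts with the intended core values, (ii) each triggered cascade removes exactly its intended elements and never propagates spuriously into $i$ or across unrelated gadgets, and (iii) the element gadgets remain fragile to a full cluster's collapse yet robust to any direct deletion of fewer than $b$ edges; getting all three to coexist is where the reduction's combinatorial tuning lives. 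I would finally remark why the hypothesis $C_{(i,G)}\ge 2$ is essential: for $k=1$ a node lies in $G_1$ iff it retains a single neighbor in $G_1$, no genuine cascade can arise, and collapsing $i$ reduces to cutting it to degree $0$, which is polynomial, so the hardness appears only once $k\ge 2$.
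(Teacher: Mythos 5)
Your plan has the right skeleton---and, notably, it attempts more than the paper itself does---but as submitted it contains a genuine gap: the reduction is never instantiated, and the one lemma that carries the whole proof is deferred. You correctly identify that the crux is soundness (no cheaper, non-set-aligned attack exists), but your treatment of it is a promise (\emph{``I would pad \dots\ and then argue \dots''}) rather than an argument, and two of your calibration requirements are not merely tedious but genuinely delicate. First, padding the backbone and clusters so that ``their individual core strengths far exceed $b$'' protects them only against \emph{direct} deletions: core strength counts a node's surplus of supportive neighbors, and a node with large core strength can still be stripped by a cascade arriving from elsewhere---that is the very phenomenon TNCP exploits---so backbone robustness must be argued against induced collapses, not against deletion budgets. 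Second, elements typically belong to several sets, so the cluster machinery of a shared element is wired to multiple set nodes $w_j$; you must show that collapsing $w_j$ removes exactly the elements of $S_j$ without weakening some other $w_{j'}$ or its clusters enough to make a second set's collapse cheaper than its trigger, and without partial damage to an element gadget accumulating across sets. Until you exhibit concrete degrees and multiplicities and carry out the exchange argument (every optimal deletion set can be rewritten, without increasing its size, into trigger edges whose sets cover $U$), the equivalence between $NR_{(i,G)}\le b$ and the existence of a cover of size $b$---the entire content of the theorem---remains unproven.

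For comparison, the paper's own proof follows the same outline but stops even earlier than you do. It dispatches $C_{(i,G)}=1$ exactly as you do (collapsing from $G_1$ means isolating the node, which is polynomial). For $C_{(i,G)}\ge 2$ it asserts that Set Cover reduces to TNCP, but what it actually writes down is the reverse-direction observation: it takes $SN_{(i,G)}$ as the universe and the collapse footprints $\Phi(e)$ of edge sets as the covering objects, i.e., it expresses TNCP \emph{as} a covering problem, which by itself yields no hardness; and its accompanying estimate of $f=\sum_{m=1}^{\delta}\binom{|E|}{m}$ only shows that naive exhaustive search is exponential, which is not an NP-hardness argument either. So your gadget-based reduction from the Set Cover decision problem is what a rigorous proof of this theorem would actually require, and your NP-membership remark (a certificate is verified by one core decomposition) is a correct point the paper omits. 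But in its current form your proposal shares the paper's essential omission: no explicit polynomial-time construction with a proven two-way correspondence.
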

\begin{proof}
  First, when $C_{(i,G)}=1$, according to Definition \ref{def: k-core}, it is easy to realize that some node will always remain in $G_1$ as long as at least one neighbor is connected with it. In this way, if we want a node to collapse from $G_1$ to $G_0$, we have to disconnect all of its adjacent neighbors and make it isolated from $G$, where the cost of operations is in polynomial time.

  Then, when $C_{(i,G)}\geq 2$, considering the cascade phenomenon of $k$-core collapse, a slight disturbance is able to lead a huge variation to the target node on weakening the number of its supportive neighbors. Therefore, in such a situation, the Set Cover Problem (SCP) which has been proved to be NP-hard \cite{korte2011combinatorial} can be reduced to TNCP problem. Given a universe collection $SN_{(i,G)}$ and a set of candidates $E$ which contains all edges within $G$ under the condition of target node $i$. In order to cover the TNCP problem, we have to find out a minimal-size set of edges $e\subseteq E$ such that $|SN_{(i,G)}\setminus \Phi(e)|<C_{(i,G)}$, where $\Phi(e)$ represents those collapsed nodes whose core values will be changed after the removal of $e$ from $G$.

  Additionally, paying attention to the complexity of TNCP problem, without any prior information, we have to traverse all possible combinations of the already existing edges, whose mathematical expression can be formulated as $f=\sum_{m=1}^{\delta}\binom{|E|}{m}$, where $\delta=CS_{(i,G)}$. Based on the induction formulas of $\binom{n}{m}=\binom{n-1}{m}+\binom{n-1}{m-1}$ and $\sum_{m=0}^{M}\binom{M}{m}=2^M$, the above equation could be written as
  \begin{equation}
    \label{equ: complexity}
    \begin{split}
      f & = \mathcal{O}(|E|^{\delta-1})\binom{\delta}{0} + \mathcal{O}(|E|^{\delta-2})\binom{\delta}{1} + \dots + \binom{\delta}{\delta} \\
      & = \mathcal{O}(|E|^{\delta-1}) + \mathcal{O}(|E|^{\delta-2})\cdot 2^{1} + \dots + 2^{\delta} \\ 
      & = 2^\delta + \sum_{m=1}^{\delta}\mathcal{O}(|E|^{m-1})\cdot 2^{\delta-m}
    \end{split}
  \end{equation}

  With the complexity in the amount of the exponential increase, it is evident that traversing all combinations takes non-polynomial time. Combining the aforementioned approaches, the TNCP problem cannot be addressed in polynomial time when $C_{(i,G)}\geq 2$.
\end{proof}


\begin{example}

  As seen in Figure \ref{fig: k_core} covering $18$ nodes and $28$ edges, node $6$ is chosen to be the target node for $k$-node collapse. As mentioned before, there is just one edge, like $(1,2)$, should be removed in order to achieve the collapse of node $6$. However, without the omniscient knowledge, it is difficult to locate which edge or edges are necessarily deleted. From the descriptions above, it is naturally realized that $NR_{(6,G)}\leq CS_{(6,G)}$, thus we need to visit all $\sum_{m=1}^{2}\binom{28}{m}$ combinations to identify the key edge or edges useful for $k$-node collapse under the worst situation. Fortunately, in this scenario, the computational complexity is not high because of the previous information of $NR_{(6,G)}=1$ with the removal of edge $(4, 6)$.
\end{example}

However, the robustness of the target node will always be equal to $1$, like $NR_{(7,G)}=2$ under the removal of $(1,2)$ and $(7, 8)$ as well as $NR_{(8,G)}=2$ under the removal of $(4,8)$ and $(7, 8)$ in Figure \ref{fig: deletion}. In real-world networks, the robustness of some nodes may reach tens or even hundreds, which can probably lead to an exponential increase in time consumption. Additionally, real-world networks often contain thousands or even millions of edges, making it challenging to find a feasible solution within a reasonable amount of time. Therefore, it is important to design an effective heuristic algorithm to solve the TNCP problem.


\subsection{Candidate Reduction}
\label{sec: theorem}
As mentioned above, the naive exhaustive method for solving the TNCP problem is highly complex, making it difficult to implement in practice. In order to obtain a feasible solution within a reasonable amount of time, we need to reduce the number of candidate edges. In this section, we will introduce and prove some theorems that can be used to achieve this reduction in candidates.


\begin{theorem}
  \label{the: nodeinfluence}
  $\forall (i,j)\in E$, when $C_{(i,G)}>C_{(j,G)}$, it satisfies that $i\in SN_{(j,G)}\land j\notin SN_{(i,G)}$, and when $C_{(i,G)}=C_{(j,G)}$, it satisfies that $i\in SN_{(j,G)}\land j\in SN_{(i,G)}$.
\end{theorem}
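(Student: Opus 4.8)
The plan is to unwind the definitions of core value and supportive neighbors and reason about membership in the relevant cores. Recall from Definition~\ref{def: corevalue} that $C_{(i,G)}$ is characterized by $i\in G_{C_{(i,G)}}$ but $i\notin G_{C_{(i,G)}+1}$, and recall that $SN_{(i,G)}$ abbreviates $SN_{(i,C_{(i,G)},G)}=\{\,j\mid j\in\mathcal{N}_{(i,G)},\ C_{(j,G)}\geq C_{(i,G)}\,\}$. So the whole statement reduces to comparing the core values of the two endpoints of the edge against each of the two thresholds $C_{(i,G)}$ and $C_{(j,G)}$. The key observation I would make explicit is the nesting of cores: since $G_{k+1}$ is an induced subgraph of $G_k$, if a node belongs to $G_k$ it belongs to every $G_{k'}$ with $k'\le k$, so $C_{(j,G)}\ge k$ is exactly the condition ``$j\in V_k$,'' and the supportive-neighbor condition is a pure comparison of core values.

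First I would handle the case $C_{(i,G)}>C_{(j,G)}$. To show $j\notin SN_{(i,G)}$, I note that membership in $SN_{(i,G)}$ requires $C_{(j,G)}\geq C_{(i,G)}$, which directly contradicts $C_{(j,G)}<C_{(i,G)}$; hence $j\notin SN_{(i,G)}$. To show $i\in SN_{(j,G)}$, I use that $(i,j)\in E$ means $i\in\mathcal{N}_{(j,G)}$, and $C_{(i,G)}>C_{(j,G)}$ gives $C_{(i,G)}\ge C_{(j,G)}$, which is precisely the threshold condition for $i$ to be a supportive neighbor of $j$ at level $C_{(j,G)}$; hence $i\in SN_{(j,G)}$.

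Next I would treat the case $C_{(i,G)}=C_{(j,G)}$, which is symmetric. Writing $k=C_{(i,G)}=C_{(j,G)}$, adjacency gives $i\in\mathcal{N}_{(j,G)}$ and $j\in\mathcal{N}_{(i,G)}$, while the equality $C_{(i,G)}=k=C_{(j,G)}$ makes the threshold conditions $C_{(i,G)}\ge k$ and $C_{(j,G)}\ge k$ both hold with equality. Therefore $i\in SN_{(j,G)}$ and $j\in SN_{(i,G)}$ simultaneously, establishing the claimed conjunction.

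I do not expect a serious obstacle here, since the statement is essentially a bookkeeping consequence of the supportive-neighbor definition; the only point that warrants care is making sure the abbreviation $SN_{(j,G)}$ is read at the correct level (namely $k=C_{(j,G)}$, which differs from $C_{(i,G)}$ in the first case), so that the threshold being compared against is the \emph{lower} endpoint's core value. The mildly subtle step is verifying $i\in SN_{(j,G)}$ in the strict case: one must confirm that $i$ actually lies in $V_{C_{(j,G)}}$, which follows from $i\in G_{C_{(i,G)}}\subseteq G_{C_{(j,G)}}$ by core nesting. Once that nesting fact is stated, every assertion in the theorem follows immediately from the definition of $SN$.
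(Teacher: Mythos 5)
Your proof is correct and takes essentially the same route as the paper's: a direct unwinding of the definition $SN_{(i,G)}=\{j \mid j\in\mathcal{N}_{(i,G)},\ C_{(j,G)}\geq C_{(i,G)}\}$, checked case by case against the two thresholds. If anything, your write-up is more complete than the paper's terse two-sentence argument, which states only the exclusion $\{j \mid C_{(j,G)}<C_{(i,G)}\}\cap SN_{(i,G)}=\emptyset$ and the equality case, leaving the verification of $i\in SN_{(j,G)}$ in the strict case implicit, whereas you justify it explicitly via core nesting.
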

\begin{proof}
  Based on the definition of supportive neighbors, it is evident that only those neighbors with core values greater than or equal to $C_{(i,G)}$ can be contained within the supportive neighbors of node $i$, i.e., $\{j|j\in G,C_{(j,G)}<C_{(i,G)}\}\cap SN_{(i,G)}=\emptyset$. For the same reason, considering $C_{(i,G)}=C_{(j,G)}$, there exists that $\{i,j\}\in SN_{(i,G)}\cap SN_{(j,G)}$.
\end{proof}


In other words, nodes with low core values could never establish relationships that would be supportive to nodes with high core values, while nodes with high core values establish one-way relationships that would be supportive of their connected nodes with low core values. Additionally, connected nodes with the same core value become supportive neighbors to each other. This suggests that the removal of edges bridging node pairs with different core values may only affect the side holding a low core value, while the removal of edges bridging node pairs with the same core values may affect both sides. Combining the description of Theorem \ref{the: nodeinfluence}, those relationships bridging nodes with higher core values and lower core values are named as one-way supportive relationships, and those relationships bridging nodes with the same core value are named as bidirectional supportive relationships. In this way, the neighbors who control the bidirectional supportive relationships with an arbitrary node $i$ are recorded as $\widetilde{SN}_{(i,G)}=\{j|j\in N_{(i,G)}, C_{(j,G)}=C_{(i,G)}\}$.

\begin{theorem}
  \label{lem: edgeinfluence}
  If an edge $(i,j)\in E$ is removed, for all nodes in $G$, only those with core values equal to $\min(C_{(i,G)}, C_{(j,G)})$ may have their core values changed.
\end{theorem}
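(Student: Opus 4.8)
The plan is to work directly with the nested family of $k$-core subgraphs $G_k=(V_k,E_k)$ and to show that deleting $(i,j)$ leaves every core except the one at level $m:=\min(C_{(i,G)},C_{(j,G)})$ exactly unchanged, from which the statement about individual node core values follows. Assume without loss of generality that $C_{(i,G)}=m\le C_{(j,G)}$, and write $G'=(V,E\setminus\{(i,j)\})$. First I would record the monotonicity observation that deleting an edge can only shrink cores: since every internal edge of $G'_k$ is also present in $G$, the node set $V(G'_k)$ induces a subgraph of $G$ with minimum degree at least $k$, so by maximality of $G_k$ we get $V(G'_k)\subseteq V_k$. Hence $C_{(v,G')}\le C_{(v,G)}$ for all $v$, and ``having its core value changed'' can only mean a strict decrease. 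It therefore suffices to prove $G'_\ell=G_\ell$ for every $\ell\ne m$.

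For the regime $\ell>m$, the key point is that the deleted edge is simply not present inside $G_\ell$: since $C_{(i,G)}=m<\ell$ we have $i\notin V_\ell$, so $(i,j)$ is not an edge of $G_\ell$ and the induced subgraph $G[V_\ell]=G_\ell$ is untouched by the deletion. Thus $G_\ell$ is still a minimum-degree-$\ge\ell$ induced subgraph of $G'$, and maximality together with the monotone inclusion forces $G'_\ell=G_\ell$. For the regime $\ell<m$ the edge does lie inside $G_\ell$ (both endpoints belong to $V_m\subseteq V_\ell$ by core nesting), so I would instead argue via a degree margin. Using $G_m\subseteq G_\ell$ together with $i,j\in V_m$, both endpoints satisfy $d_{(i,G_\ell)}\ge d_{(i,G_m)}\ge m$ and likewise for $j$; since $m\ge\ell+1$, deleting the single edge $(i,j)$ leaves each of them with degree at least $m-1\ge\ell$ in $G'[V_\ell]$, while every other node of $V_\ell$ keeps its degree. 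Hence $G'[V_\ell]$ still has minimum degree $\ge\ell$, and again maximality plus monotonicity yields $G'_\ell=G_\ell$.

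Finally I would translate these core equalities into the statement about nodes. Take any $v$ with $c:=C_{(v,G)}\ne m$. If $c>m$ then both $c$ and $c+1$ exceed $m$, so $G'_c=G_c$ and $G'_{c+1}=G_{c+1}$, and $v\in V_c\setminus V_{c+1}$ gives $C_{(v,G')}=c$. If $c<m$ then $G'_c=G_c$ gives $C_{(v,G')}\ge c$, while $v\notin V_{c+1}$ combined with the one-sided inclusion $V(G'_{c+1})\subseteq V_{c+1}$ gives $C_{(v,G')}<c+1$; hence $C_{(v,G')}=c$. In both cases the core value is preserved, so only level-$m$ nodes can change. I expect the main subtlety to be the boundary case $c=m-1$: here the core directly above $v$ is exactly the level-$m$ core, which genuinely may move, so one cannot invoke $G'_m=G_m$ and must instead lean on the one-sided monotone inclusion to rule out an increase. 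It is precisely this interplay between the ``edge-absent'' argument above level $m$ and the ``degree-margin'' argument below level $m$, meeting at the single movable level $m$, that needs to be handled with care.
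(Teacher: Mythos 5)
Your proof is correct, but it takes a genuinely different route from the paper's. The paper argues locally and forward: using the core support condition (Theorem~\ref{the: coresupport}) it shows the endpoint with the smaller core value $k_{min}$ may be expelled from $G_{k_{min}}$, imports from cited prior work (Sariy{\"u}ce et al., Li et al.) the fact that a node's core value can drop by at most $1$ when a supportive neighbor is lost, and then traces the cascade through the bidirectional supportive relationships $\widetilde{SN}$ of Theorem~\ref{the: nodeinfluence} to argue that only $k_{min}$-nodes can be swept along, with separate remarks for nodes above and below level $k_{min}$. You instead prove the stronger global statement that every core of $G^\prime$ other than the level-$m$ one coincides with that of $G$, using only monotonicity of cores under edge deletion plus maximality of the $k$-core: an edge-absence argument above level $m$ (the deleted edge does not lie inside $G_\ell$ for $\ell>m$, since $i\notin V_\ell$) and a degree-margin argument below it ($d_{(i,G_\ell)}\geq d_{(i,G_m)}\geq m\geq \ell+1$, so losing one edge keeps the minimum degree of $G^\prime[V_\ell]$ at least $\ell$). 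Your version is self-contained — notably, your equality $V(G^\prime_{m-1})=V_{m-1}$ recovers the ``drops by at most $1$'' fact for single-edge deletion as a free byproduct rather than citing it — and it handles the boundary case $c=m-1$ cleanly via the one-sided inclusion, whereas the corresponding cascade step in the paper (``it is easy to find that only nodes whose core values equal $k_{min}$ will probably collapse'') is left informal. What the paper's route buys instead is operational insight: identifying $\widetilde{SN}$ as the propagation channel is precisely what its CalculateImpact algorithm later exploits, whereas your argument certifies invariance of the other cores without describing how the level-$m$ collapse spreads. One cosmetic caution: for $\ell<m$ your conclusion ``$G^\prime_\ell=G_\ell$'' holds at the level of node sets only, since the edge $(i,j)$ itself is absent from $G^\prime_\ell$; because core values depend only on core membership, this does not affect your final translation step, but the wording should say $V(G^\prime_\ell)=V_\ell$.
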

\begin{proof}

  It might be assumed that $C_{(i,G)}\geq C_{(j,G)}=k_{min}$ and be marked that $G^\prime=G\setminus \{(i,j)\}$. In accordance with Theorem \ref{the: coresupport}, the removal of edge $(i,j)$ will surely make node $j$ collapse if and only if $SN_{(j,k_{min},G)}=k_{min}$. After the elimination of $(i,j)$, there exists that $SN_{(j,k_{min},G^\prime)}\leq k_{min}-1$ which is absolutely in violation with Theorem \ref{the: coresupport} and makes node $j$ excluded from $G_{k_{min}}$. In addition, Sariy{\"u}ce et al. \cite{sariyuce2013streaming} and Li et al. \cite{li2013efficient} have proved that the core value of some node can decrease at most $1$ when one of its supportive neighbors is lost. Benefiting from this, node $j$ will still remain in $G^\prime_{k_{min}-1}$ and satisfy that $SN_{(j,k_{min}-1,G^\prime)}\geq k_{min}-1$.
  According to Theorem \ref{the: nodeinfluence}, the collapse of node $j$ from $G_{k_{min}}$ to $G^\prime_{k_{min}-1}$ probably leads to the collapse of those nodes contained in $\widetilde{SN}_{(j,G_{k_{min}})}$. Following like this, based on the cascade phenomenon, it is easy to find that only nodes whose core values equal to $k_{min}$ will probably collapse from $G_{k_{min}}$ to $G^\prime_{k_{min}-1}$ in the case of eliminating edge $(i,j)$. Besides, for those nodes with core values larger than $k_{min}$, according to Theorem \ref{the: nodeinfluence}, $k_{min}$-nodes make no contributions to supporting their presence in $G_{k_{min}+1}$ so that no effect will work on them after edge $(i,j)$ is removed. Meanwhile, due to the existence of those collapsed nodes in $G^\prime_{k_{min}-1}$, on the basis of Theorem \ref{the: nodeinfluence}, they still establish supportive relationships with those nodes with core values less than $k_{min}$ whose number of supportive neighbors remains the same so that no change happens to their core values after edge $(i,j)$ is removed. 
\end{proof}

Benefiting from Theorem \ref{lem: edgeinfluence}, only the removal of edges contained in $E_{k\setminus k+1}=E_k\setminus E_{k+1}=\{(u,v)|(u,v)\in E,min(C_{(u,G)},C_{(v,G)})=k\}$ will have the probability to make the target node $i$ with $C_{(i,G)}=k$ collapse, which allows us to reduce the candidates from $E$ to $E_{k\setminus k+1}$. As illustrated in Figure \ref{fig: deletion} where only a $3$-core exists, in order to make node $5$ with $C_{(5,G)}=3$ collapse, we should take $E_{3\setminus 4}=E_3=E$ into consideration. However, we may notice that the removal of edge $(1,3)$ leads to the collapse of node $5$ while none of nodes contained in this graph collapse after the removal of edge $(5,7)$, which shows a substantial difference. Therefore, the following theorem is presented to further narrow down the search space of candidate edges.

\begin{theorem}
  \label{the: edgeinfluence}
  A given edge $(i,j)\in E$ whose elimination could make nodes within $G$ collapse requires both of the following two conditions to be satisfied: (i) $min\{CS_{(i,G)}, CS_{(j,G)}\}=1$; (ii) $(CS_{(i,G)}-CS_{(j,G)})\cdot (C_{(i,G)}-C_{(j,G)})\geq 0$.
\end{theorem}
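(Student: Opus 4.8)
The plan is to prove that both conditions are \emph{necessary}: assuming the removal of $(i,j)$ causes at least one node in $G$ to lose core value, I will derive (i) and (ii). Throughout I write $k_{\min}=\min(C_{(i,G)},C_{(j,G)})$ and $G^\prime=G\setminus\{(i,j)\}$, and I assume without loss of generality that $C_{(j,G)}=k_{\min}\le C_{(i,G)}$.

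First I would localize the effect of the edge removal. By Theorem \ref{lem: edgeinfluence}, only $k_{\min}$-nodes can change their core value, so every collapsing node lies in $G_{k_{\min}}$. Immediately after deleting $(i,j)$, and before any cascade propagates, the only nodes whose supportive-neighbor counts have changed are the endpoints $i$ and $j$; every other $k_{\min}$-node still retains all of its supportive neighbors. Hence if any node collapses at all, the cascade must be seeded by an endpoint that is itself a $k_{\min}$-node and that fails the Core Support Condition of Theorem \ref{the: coresupport} right after the deletion.

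Next I would pin down which endpoint can seed the collapse and how. If $C_{(i,G)}>C_{(j,G)}$, Theorem \ref{the: nodeinfluence} gives $j\notin SN_{(i,G)}$, so deleting $(i,j)$ does not reduce the supportive neighbors that keep $i$ in $G_{C_{(i,G)}}$; together with Theorem \ref{lem: edgeinfluence} this forces $i$ to be unaffected, leaving $j$ as the only candidate. In either case, a collapsing endpoint $v$ with $C_{(v,G)}=k_{\min}$ loses exactly one supportive neighbor, so by the Core Support Condition it is squeezed out of $G_{k_{\min}}$ iff $|SN_{(v,G)}|-1<k_{\min}$; since $v\in G_{k_{\min}}$ already forces $|SN_{(v,G)}|\ge k_{\min}$, this is equivalent to $|SN_{(v,G)}|=k_{\min}$, i.e.\ $CS_{(v,G)}=1$. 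Thus the $k_{\min}$-endpoint that triggers the collapse must have core strength exactly $1$.

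Finally I would translate this combinatorial requirement into (i) and (ii) via a short case split, using that core strength is always at least $1$ (because $|SN_{(v,G)}|\ge C_{(v,G)}$). When $C_{(i,G)}=C_{(j,G)}$, the trigger is $i$ or $j$, so $\min\{CS_{(i,G)},CS_{(j,G)}\}=1$, giving (i), while (ii) holds trivially since $C_{(i,G)}-C_{(j,G)}=0$. When $C_{(i,G)}>C_{(j,G)}$, the trigger must be $j$, so $CS_{(j,G)}=1$, which simultaneously yields $\min\{CS_{(i,G)},CS_{(j,G)}\}=1$ (condition (i)) and $CS_{(i,G)}-CS_{(j,G)}=CS_{(i,G)}-1\ge 0$, whence $(CS_{(i,G)}-CS_{(j,G)})\cdot(C_{(i,G)}-C_{(j,G)})\ge 0$ (condition (ii)). I expect the main obstacle to be the middle step: making rigorous the claim that the cascade must be seeded by an endpoint, and cleanly ruling out the higher-core endpoint via Theorems \ref{the: nodeinfluence} and \ref{lem: edgeinfluence} in the unequal case. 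The sign condition (ii) is not an independent topological fact but precisely the bookkeeping recording \emph{which} endpoint is forced to carry core strength $1$; the delicate point is to show that (i) and (ii) together are equivalent to ``the minimum-core-value endpoint has core strength $1$'' rather than merely ``some endpoint has core strength $1$.''
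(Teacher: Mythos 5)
Your proposal is correct and takes essentially the same route as the paper's proof: both arguments reduce collapse to an endpoint immediately violating the Core Support Condition (Theorem \ref{the: coresupport}), invoke Theorem \ref{the: nodeinfluence} (plus the localization of Theorem \ref{lem: edgeinfluence}) to exclude the higher-core endpoint, and recover (i) and (ii) through the same case split, the only structural difference being that you normalize by core value ($C_{(j,G)}\le C_{(i,G)}$) while the paper normalizes by core strength ($CS_{(i,G)}\ge CS_{(j,G)}=1$). Your write-up is in fact slightly more careful at two points the paper glosses over: you state explicitly that any cascade must be seeded by an endpoint failing Theorem \ref{the: coresupport} right after the deletion, and your bookkeeping via condition (ii) correctly covers the sub-case $CS_{(i,G)}=CS_{(j,G)}=1$ with $C_{(i,G)}<C_{(j,G)}$, where the paper's blanket claim that no node collapses when $C_{(i,G)}<C_{(j,G)}$ holds only up to its implicit relabeling.
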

\begin{proof}
  Firstly, the condition (i) will not be satisfied if and only if neither $CS_{(i,G)}$ nor $CS_{(j,G)}$ is equal to $1$, i.e., $CS_{(i,G)}\geq 2$ and $CS_{(j,G)}\geq 2$. In such a case, node $i$ and node $j$ satisfy that $|SN_{(i,G)}|\geq C_{(i,G)} + 1$ and $|SN_{(j,G)}|\geq C_{(j,G)} + 1$. The removal of edge $(i,j)$ will absolutely not make node $i$ or node $j$ to violate Theorem \ref{the: coresupport}.

  Next, assume that the first condition has been satisfied, it might be supposed that $CS_{(i,G)}\geq CS_{(j,G)}=1$ since edge $(i, j)$ is equivalent to edge $(j, i)$ in $G$. For the core values of node $i$ and node $j$, there are three cases to consider, i.e., $C_{(i,G)}>C_{(j,G)}$, $C_{(i,G)}<C_{(j,G)}$ and $C_{(i,G)}=C_{(j,G)}$. According to Theorem \ref{the: nodeinfluence}, the removal of edge $(i,j)$ will surely make node $j$ collapse because of the violation of Theorem \ref{the: coresupport} in the cases of $C_{(i,G)}>C_{(j,G)}$ and $C_{(i,G)}=C_{(j,G)}$ while no node will collapse in the case of $C_{i,G}<C_{(j,G)}$.
\end{proof}

Combining the findings derived by Theorem \ref{lem: edgeinfluence} and Theorem \ref{the: edgeinfluence}, for the targeted collapse mission of a given node $i$ with $C_{(i,G)}=k$, those edges existing in $E_{k\setminus k+1}$ and connecting to $V^C_{(k,G)}=\{u|u\in V, C_{(u,G)}=k\land CS_{(u,G)}=1\}$ are what we should focus on and take into candidates. Actually, nodes contained in $V^C_{(k,G)}$ are so-called corona nodes of $G_k$ \cite{baxter2015critical, zhou2022attacking,baxter2011heterogeneous}, which denotes that these nodes have exactly $k$ one-hop neighbors in $G_k$. However, the subgraph constructed by corona nodes may not be connected and will probably be divided into several disconnected components. As shown in Figure \ref{fig: k_core} where six corona nodes $\{1,3,4,5,9,10\}$ exist, the component constructed by nodes $\{1,3,4\}$ is disconnected with that constructed by nodes $\{9, 10\}$, and so does that constructed by node $\{5\}$. Therefore, for simplicity of representation, we provide the following definition to represent the corona component in which a particular corona node $i$ exists.

\begin{definition}
  \label{def: coronapedigree}
  \textbf{Corona Pedigree.} For a corona node $i\in G$ with $C_{(i,G)}=k$, the corona pedigree of $i$, denoted as $P_{(i,G)}$, represents the largest-connected subgraph containing $i$ as its component and satisfies that $\forall j\in P_{(i,G)}, C_{(j,G)}=k\land CS_{(j,G)}=1$.
\end{definition}
\begin{example}
  As shown in Figure \ref{fig: k_core}, there exist three corona pedigrees in $G_3$, e.g., $P_{(4,G)}$ contains nodes $\{1,3,4\}$ and edges $\{(1,3), (1,4)\}$, $P_{(5,G)}$ contains node $\{5\}$, $P_{(10,G)}$ contains nodes $\{9,10\}$ and edge $\{(9,10)\}$.
\end{example}




Note that $P_{(j,G)}$ is equivalent to $P_{(i,G)}$ if it satisfies that $j\in P_{(i,G)}$. Then, those edges adjacent to $P_{(i,G)}$ are represented as $E^P_{(i, G)}=\{(u,v)|(u,v)\in E, u\in P_{(i,G)} \lor v\in P_{(i,G)}\}$ and the following theorem could be deduced.


\begin{theorem}
  \label{the: cascade}
  The removal of an arbitrary edge within $E^P_{(i, G)}$ will absolutely make all nodes within $P_{(i, G)}$ collapse.
\end{theorem}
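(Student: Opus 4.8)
The plan is to exploit the defining property of the corona pedigree: every node $j \in P_{(i,G)}$ satisfies $C_{(j,G)} = k$ and $CS_{(j,G)} = 1$. Substituting these into the core strength formula $CS_{(j,G)} = |SN_{(j,G)}| - C_{(j,G)} + 1$ gives $|SN_{(j,G)}| = k$, so by Theorem \ref{the: coresupport} each pedigree node has exactly $k$ supportive neighbors, i.e., exactly $k$ neighbors inside $G_k$ and degree precisely $k$ within the subgraph $G_k$. This structural fact drives the entire cascade, so I would establish it first.

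Next I would recall the peeling characterization of the $k$-core: $G_k$ is obtained by repeatedly deleting any node whose current degree is below $k$, and a node \emph{collapses} (its core value falls below $k$) exactly when it is deleted by this process on the perturbed graph. Let $(a,b) \in E^P_{(i,G)}$ be the removed edge with $a \in P_{(i,G)}$. In the relevant case, namely the candidate edges singled out by Theorem \ref{lem: edgeinfluence} and Theorem \ref{the: edgeinfluence}, which lie in $E_{k\setminus k+1}$ and therefore have both endpoints in $G_k$, the endpoint $b$ is a supportive neighbor of $a$, so deleting $(a,b)$ drops the degree of $a$ in $G_k$ from $k$ to $k-1$. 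Thus $a$ immediately violates Theorem \ref{the: coresupport} and is peeled away, i.e., $a$ collapses.

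The heart of the argument is the cascade. Once $a$ is removed, I would invoke the result of Sariy{\"u}ce et al.\ and Li et al.\ (already used in the proof of Theorem \ref{lem: edgeinfluence}) that a lost supportive neighbor lowers a core value by at most one, so $a$ drops to core value exactly $k-1$ and leaves $G_k$. Every pedigree neighbor $w$ of $a$ had degree exactly $k$ in $G_k$ and counted $a$ among its $k$ supportive neighbors; after $a$ departs, $w$ retains only $k-1$ supportive neighbors and is peeled in turn. I would formalize this by induction on the graph distance within $P_{(i,G)}$ from $a$ (equivalently, by showing that the set of collapsed pedigree nodes is closed under pedigree adjacency): since $P_{(i,G)}$ is connected by Definition \ref{def: coronapedigree}, the peeling reaches every node of the pedigree, so all of them collapse.

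The main obstacle is making the cascade fully rigorous rather than hand-waving the domino effect: the clean route is the peeling characterization together with the bounded-decrement lemma, which guarantees that each collapsed pedigree node actually exits $G_k$ (its core value falling to $k-1$) and hence removes exactly one unit of support from each of its pedigree neighbors, all of whom sit at the critical degree $k$. A secondary point worth flagging is the edge-type distinction: an incident edge joining a pedigree node to a neighbor of strictly smaller core value is not a candidate, since it does not lie in $E_{k\setminus k+1}$, and its removal would not reduce the $G_k$-degree of the pedigree endpoint at all; the statement is therefore understood for the relevant supportive edges, for which a pedigree endpoint's degree inside $G_k$ necessarily decreases and initiates the collapse.
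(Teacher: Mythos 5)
Your proof is correct and follows essentially the same route as the paper's: core strength $1$ places each pedigree node at the critical degree $k$ in $G_k$, so losing one supportive neighbor peels it, and connectivity of $P_{(i,G)}$ propagates the collapse; your induction on pedigree distance, combined with the bounded-decrement result of Sariy{\"u}ce et al.\ and Li et al., is simply a rigorous rendering of the paper's one-line appeal to the ``cascade phenomenon.'' On one point you are genuinely more careful than the paper: its proof asserts that every edge of $E^P_{(i,G)}$ bridges a corona node with one of its supportive neighbors, but under the literal definition $E^P_{(i,G)}=\{(u,v)\in E \mid u\in P_{(i,G)}\lor v\in P_{(i,G)}\}$ this is false whenever a pedigree node has a neighbor of core value below $k$ (e.g., a node of a $K_4$ carrying an extra pendant edge); removing such an edge leaves the pedigree endpoint's degree in $G_k$ untouched and, by Theorem \ref{lem: edgeinfluence}, cannot collapse any $k$-node. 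Your explicit restriction to the supportive edges --- those lying in $E_{k\setminus k+1}$, which are the only candidates the surrounding discussion considers anyway --- is exactly the repair the statement needs, and it is clearly the reading the paper intends.
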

\begin{proof}

  According to Definition \ref{def: coronapedigree}, each node within $P_{(i,G)}$ possesses its core strength of $1$ which means the disconnection of any supportive neighbor will make this node collapse. Besides, each edge within $E^P_{(i,G)}$ actually bridges some corona node within $P_{(i,G)}$ with one of its supportive neighbors. In this way, if one of edges in $E^P_{(i,G)}$ is removed, the corona node (or corona nodes) adjacent to it will surely collapse. Because of the cascade phenomenon, the other nodes contained in $P_{(i,G)}$ will collapse follow.
\end{proof}
\begin{example}
  As shown in Figure \ref{fig: deletion}, taking $P_{(1,G)}$ where nodes $\{1,3\}$ exist as example, We get $E^P_{(1,G)}=\{(1, 2), (1, 3), (1, 5), (2, 3), (3, 7)\}$. Node $3$ will be absolutely squeezed out of $3$-core after the removal of an arbitrary edge contained in $E^P_{(1,G)}$, like $(2, 3)$, and then node $1$ will also collapse from $G_3$ because of the cascade phenomenon.
\end{example}

\begin{figure*}[!t]
  \centering
  \includegraphics[width=\linewidth]{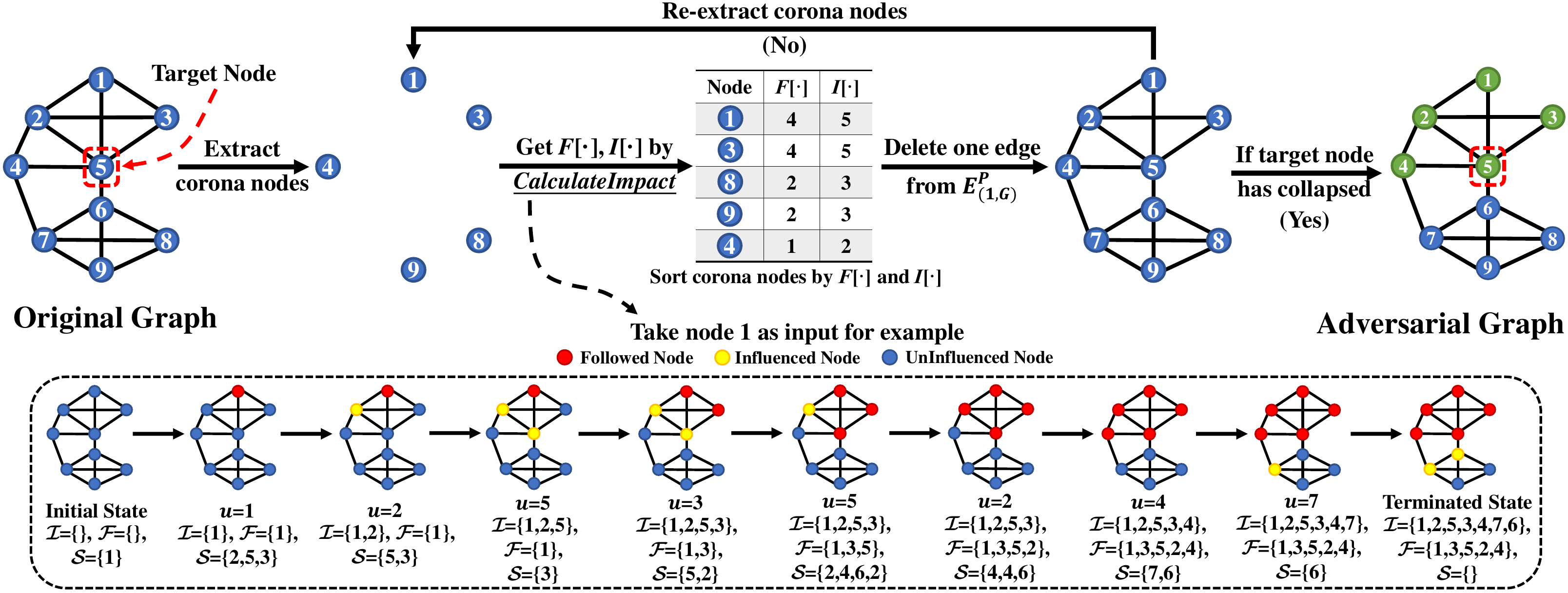}
  \caption{
  The framework of TNC algorithm. Given a target node $5$ with $CS_{(5,G)}=3$, we first extract the corona nodes from $G_3$, and then evaluate the impact of each corona node on the target node through CalculateImpact algorithm. After that, the most impacted node $1$ is filtered out and one edge existing in $E^P_{(1,G)}$ is deleted. If the target node has collapsed, the adversarial graph will be output and the removed edges will be returned; otherwise, we re-extract the corona nodes and repeat the above process. The contents shown in the dotted box display the detailed operations of CalculateImpact algorithm. The detailed descriptions will be presented in Section \ref{sec: TNC}.
  }
  \label{fig: progress}
\end{figure*}

\section{Methodologies}
\label{sec: methodology}


In this section, in order to address the TNCP problem, we propose an effective heuristic algorithm called Targeted $k$-Node Collapse (TNC) as the first solution. Additionally, based on TNC algorithm, we design an optimized strategy called Adjacent Targeted $k$-Node Collapse (ATNC) to further reduce computational complexity, making it suitable for large-scale networks.

\subsection{TNC Algorithm}
\label{sec: TNC}
To solve the TNCP problem, we propose the TNC algorithm, which itreatively removes one edge that can lead to the greatest impact on the target node $i$ until the target node collapses. The impact on the target node is determined by maximizing (i) the number of collapsed nodes within $SN_{(i,G)}$, and (ii) the number of nodes whose core strengths change within $SN_{(i,G)}$. As discussed earlier, those edges existing in $E_{k\setminus k+1}$ and connecting to $V^C_{(k,G)}$ play significant roles in the collapse of target node $i$ with $C_{(i,G)}=k$. Then, according to Theorem \ref{the: cascade}, for a corona node $u\in V^C_{(k,G)}$ and its corona pedigree $P_{(u,G)}\in G_k$, it is easy to realize that the disconnection of the relationship between node $u$ and one of its supportive neighbors will actually make all nodes within $P_{(u,G)}$ collapse from $G_k$ and then make all edges within $E^P_{(u,G)}$ be excluded from $E_{k\setminus k+1}$. In this manner, in order to avoid unnecessary duplicate operations, we only need to select the corona pedigree $P_{(v,G)}$ whose detachment leads to the greatest impact on the target node and removes one of edges existing in $E^P_{(v,G)}$ in each iteration until the target node collapses. Figure \ref{fig: progress} illustrates the overall framework of TNC algorithm along with the detailed operations shown in Algorithm \ref{alg: TNC}. Words for further descriptions are given as following.

As shown in Algorithm \ref{alg: TNC}, Line $4$, the corona nodes, $Coronas$, are firstly extracted from $G^\prime_k$ as candidates where $G^\prime$ is initialized as $G$ in Line $1$. After that, in Lines $6$-$7$, by exploiting an assistant algorithm called CalculateImpact which will be introduced in the following paragraphs, the impact which will be made on the target node $i$ is measured by $F[u]$ and $I[u]$ if corona node $u\in Coronas$ collapses, and $Coronas$ is updated according to Theorem \ref{the: cascade}. Next, we select the top corona node $v$ sorted according to $F[\cdot]$ (first priority) and $I[\cdot]$ (second priority) in Line $8$. Then, one of edges contained in $E^P_{(v,G^\prime)}$ is added to $e$ with the update of $G^\prime$ in Lines $13$-$14$. The above process will continue until there is the violation of Theorem \ref{the: coresupport} to make the target node $i$ collapse. Note that if the collapse of $v$ makes no supportive neighbors of node $i$ collapse, we will remove the edge bridging the target node and its supportive neighbor with the minimal core strength in $G^\prime$ as instead, in Lines $9$-$11$.

\textbf{CalculateImpact Algorithm.} After the collapse of node $u\in V$, for all nodes in $G$, those nodes whose core strength decreases are named as Influenced Nodes, those whose core value decreases are named as Followed Nodes, and those whose core strengths and core values remain the same are named as Uninfluenced Nodes. To effectively measure the impact that the collapse of a corona node $n$ can make on the target node $i$, we offer CalculateImpact algorithm which is based on Depth-First Search (DFS) and whose details are shown in Algorithm \ref{alg: DFS}. 

As shown in Algorithm \ref{alg: DFS}, Line $1$, $\mathcal{S}$ is defined to store the nodes waiting to be visited, $\mathcal{F}$ and $\mathcal{I}$ are defined to store the followed nodes and the influenced nodes, respectively. Besides, in Line $2$, a dictionary $\mathcal{T}$ with default value of $0$ is defined to record the decrease in the number of supportive neighbors of each node in $G$ after the input node $n$ collapses. In this way, for a visited node $u$ popped from $\mathcal{S}$, if $\mathcal{T}[u]>0$, it will be marked as an influenced node and be added into $\mathcal{I}$ in Line $7$; furthermore, if $CS_{(u,G)}\leq \mathcal{T}[u]$, it will also be marked as a followed node and be added into $\mathcal{F}$ in Line $9$. Besides, if node $u$ has been marked as a followed node, on the basis of Theorem \ref{lem: edgeinfluence}, those nodes contained in $\widetilde{SN}_{(u,G)}$ and satisfying $CS_{(\cdot,G)}>\mathcal{T}[\cdot]$ will be pushed into $\mathcal{S}$ in Line $10$. Please note that those nodes marked as followed nodes will be excluded from $\mathcal{S}$ in Line $11$. The above process will be repeated iteratively until $\mathcal{S}$ is empty.

For example, contents shown in the dotted box of Figure \ref{fig: progress} exhibit the detailed process of Algorithm \ref{alg: DFS} where node $1$ with $CS_{(1,G)}=1$ is taken as the input node. First, in the initial-state graph, $\mathcal{F}$ and $\mathcal{I}$ are initialized as empty sets and $\mathcal{S} = \{1\}$. Next, in the second graph, node $1$ is popped from $\mathcal{S}$ with the update of $\mathcal{T}[1]=1$ and be added into $\mathcal{I}$. It is apparent that node $1$ is also added into $\mathcal{F}$ because of the satisfaction of $CS_{(1,G)}\leq\mathcal{T}[1]$, and its neighbors $\{2,5,3\}$ are pushed into $\mathcal{S}$. After that, in the third graph, node $2$ with $CS_{(2,G)}=2$ is popped, and we get $\mathcal{T}[2]=1$ with the addition of node $2$ into $\mathcal{I}$. Then, the next iteration will be triggered directly because of $CS_{(2,G)}>\mathcal{T}[2]$. Continuing in this flow, we finally achieve that $\mathcal{I}=\{1, 2, 5, 3, 4, 7, 6\}$ and $\mathcal{F}=\{1, 3, 5, 2, 4\}$, and further get that $|\mathcal{N}_{(5,G)}\cap\mathcal{F}|=4$ and $|\mathcal{N}_{(5,G)}\cap\mathcal{I}|=5$.


\textbf{Time Complexity.} As shown in Algorithm \ref{alg: TNC}, first, in order to extract the corona nodes $Coronas$ of $G_k$ from $G$, it takes the time in the order of $\mathcal{O}(|V|)$ in Line $3$. Then, from Line $5$ to Line $7$, one corona node within each corona pedigree in $G_k$ is assigned weights through $CalculateImpact$ algorithm which takes the time in the order of $\mathcal{O}(\overline{CS}_{k\setminus k+1}\cdot |Coronas|)$ where $\overline{CS}_{k\setminus k+1}=\frac{\sum_{v\in V_{k}\setminus V_{k+1}}CS_{(v,G)}}{|V_{k}\setminus V_{k+1}|}$. After that, considering the worst condition, $CS_{(i,G)}$ iterations are executed with the total time complexity in the order of $\mathcal{O}(CS_{(i,G)}\cdot (|V| + \overline{CS}_{k\setminus k+1}\cdot |Coronas|))$.


\begin{algorithm}[ht]   
	\caption{TNC}         
  \label{alg: TNC}
  \SetKwInOut{Input}{input}
  \SetKwInOut{Output}{output}
	\Input{the given graph $G$, the target node $i$;}               
	\Output{the removed edges $e$.}              

  $e\leftarrow$ empty set; $G^\prime \leftarrow G$; $k\leftarrow C_{(i,G)}$\;
  $F,\ I\leftarrow$ dictionaries with default value of $0$\;
  \While{$|SN_{(i,k,G^\prime)}|\geq k$}
  {
    $Coronas\leftarrow\{u|u \in G^\prime_k, \mathcal{N}_{(u,G^\prime_k)}=k\}$\;
    \ForEach{$u\in Coronas$}
    {
      $F[u], \ I[u]\leftarrow \text{CalculateImpact} (G^\prime, i, u)$\;
      $Coronas\leftarrow Coronas\setminus P_{(u,G^\prime)}$\;
    }
    $v\leftarrow$ The top corona node sorted according to $F[\cdot]$ (first priority) and $I[\cdot]$ (second priority)\;
    
    \eIf{$F[v]=0$}
    {
      $m\leftarrow$ The supportive neighbor of $i$ in $G^\prime$ with the lowest core strength\;
      $e\leftarrow e\cup \{(i, m)\}$\;
    }
    {
      $e\leftarrow e\cup\{\forall(m, n)\in E^P_{(v, G^\prime)}\}$\;
    }
    $G^\prime\leftarrow G\setminus e$\;
  }
  \Return{$e$}
\end{algorithm}

\begin{algorithm}[ht]   
  \SetKwInOut{Input}{input}\SetKwInOut{Output}{output}
	\caption{CalculateImpact}         
  \label{alg: DFS}
	\Input{the given graph $G$, the target node $i$, the input node $n$;}               
  \Output{the number of followed nodes in $\mathcal{N}_{(i,G)}$, the number of influenced nodes in $\mathcal{N}_{(i,G)}$.}
  $\mathcal{S}\leftarrow$ empty stack; $\mathcal{F}\leftarrow$ empty set; $\mathcal{I}\leftarrow$ empty set\;
  $\mathcal{T}\leftarrow$ a dictionary with default value of $0$\;
  $\mathcal{S}.push(n)$\;
  \While{$\mathcal{S}$ is not empty}
  {
    $u\leftarrow\mathcal{S}.pop()$\;
    $\mathcal{T}[u]\leftarrow\mathcal{T}[u]+1$\;
    $\mathcal{I}\leftarrow\mathcal{I}\cup \{u\}$\;
    \If{$CS_{(u, G)}\leq\mathcal{T}[u]$}
    {
      $\mathcal{F}\leftarrow\mathcal{F}\cup\{u\}$\;
      $\mathcal{V}\leftarrow\{v|v\in \widetilde{SN}_{(u, G)}, CS_{(v, G)}>\mathcal{T}[v]\}$\;
      $\mathcal{S}.push(\mathcal{V})$\;
      $\mathcal{S}\leftarrow\mathcal{S}\setminus\mathcal{F}$\;
    }
  }
  \Return{$|\mathcal{N}_{(i,G)}\cap\mathcal{F}|$, $|\mathcal{N}_{(i,G)}\cap\mathcal{I}|$}
\end{algorithm}

\subsection{ATNC Algorithm}
\label{sec: ATNC}

In the previous part, we give the introduction of TNC algorithm which iteratively removes one edge that connected to the corona pedigree whose detachment could cause the greatest impact on the target node for addressing the TNCP problem. However, in each iteration, TNC algorithm needs to traverse all nodes within $G^\prime$ to extract the corona nodes of $G^\prime_k$ and then visit each corona pedigree through CalculateImpact algorithm to filter out the most impacted one. Clearly, the process is highly time-consuming for large-scale networks which pushes the expectation of a heuristic algorithm with less time complexity. In this part, we offer Adjacent Targeted $k$-Node Collapse (ATNC) improved from TNC which actually takes the strategy of adjacent search to exploit the local information of the target node. The details of ATNC are shown in Algorithm \ref{alg: ATNC} along with its descriptions as following. 


As shown in Algorithm \ref{alg: ATNC}, Line $3$, instead of extracting all corona nodes within $G^\prime_k$ by TNC algorithm, ATNC only exploits those corona nodes adjacent to the target node which are named as corona neighbors $CorNbrs$. Next, in Lines $5$-$8$, through the same operations as those of TNC, the top corona node $v$ is filtered out. After that, we add edge $(i,v)$ into $e$ with the update of $G^\prime$ and the re-extraction of $CorNbrs$ in Lines $9$-$10$. The above process will continue until $CorNbrs$ is empty or there is the violation of Theorem \ref{the: coresupport} for the target node $i$. Note that if the above loop quits with $SN_{(i,k,G^\prime)}>k$ which means that $|CorNbrs|=0$ and the target node still remains in $G_k$, then we will randomly sample $CS_{(i,G^\prime)}$ supportive neighbors from $SN_{(i,k,G^\prime)}$ and make the target node $i$ disconnected with them in Lines $12$-$14$.

\textbf{Time Complexity.} Similar to the time complexity of TNC, since only the corona nodes existing in the one-hop neighbors of the target node will be selected as candidates, the time for collecting the candidates is in the order of $\mathcal{O}(|\mathcal{N}_{(i,G)}|)$ in Algorithm \ref{alg: ATNC}, Line $3$ at first. Then, from Line $5$ to Line $7$, each corona pedigree contained in $G_k$ is traversed with the quantification of their impact to the target node which takes the time in the order of $\mathcal{O}(\overline{CS}_{(k\setminus k+1)} \cdot |CorNbrs|)$. After that, considering the worst condition, $CS_{(i,G)}$ iterations are executed with the total time complexity in the order of $\mathcal{O}(CS_{(i,G)}\cdot (|V| + \overline{CS}_{k\setminus k+1}\cdot |CorNbrs|))$.

\begin{algorithm}[ht]   
	\caption{ATNC}         
  \label{alg: ATNC}
	\KwIn{the given graph $G$, the target node $i$;}               
	\KwOut{the removed edges $e$.}              
  $e\leftarrow$ empty set; $G^\prime\leftarrow  G$; $k\leftarrow C_{(i,G)}$\;
  $F,\ I\leftarrow$ dictionaries with default value of $0$\;
  $CorNbrs\leftarrow \{j|j\in \widetilde{SN}_{(i,G^\prime)}, CS_{(j,G^\prime)}=1\}$\;
  \While{$|CorNbrs|>0$ \textbf{and} $|SN_{(i,k,G^\prime)}|\geq k$}
  {
    \ForEach{$u\in CorNbrs$}
    {
      $F[u], \ I[u]\leftarrow \text{CalculateImpact} (G^\prime, i, u)$\;
      $CorNbrs\leftarrow CorNbrs\setminus P_{(u,G^\prime)}$\;
    }
    
    $v\leftarrow$ The top corona node sorted according to $F[\cdot]$ (first priority) and $I[\cdot]$ (second priority)\;
    $e\leftarrow e\cup \{(i,v)\}$\;
    $G^\prime\leftarrow G\setminus e$\;
    Re-extract $CorNbrs$\;
  }
  \If{$|SN_{(i,k,G^\prime)}|\geq k$}
  {
    $e^\prime\leftarrow \text{Sample}(\{(i,j)|j\in SN_{(i,G^\prime)}\}, CS_{(i,G^\prime)})$\;
    $e\leftarrow e\cup e^\prime$\;
  }
  \Return{$e$}
\end{algorithm}

\section{Experiments}
\label{sec: experiment}

In this section, our experiments will be conducted on $16$ real-world network datasets collected from various domains to demonstrate the performance of TNC and ATNC. We also include $4$ baseline methods for comparisons. All of our experiments are deployed on a server with Intel(R) Xeon(R) Gold 5218R CPU @ 2.10GHz and 377GB RAM, which installs Linux Ubuntu 20.04.4.


\subsection{Datasets}
The basic properties of $16$ real-world networks from various domains, e.g., Social Network (SN), Collaboration Network (CN), Infrastructure Network (IN) and Web Network (WN), are presented in Table \ref{tab: dataset}. Different labels are exploited to distinguish the different public platforms where networks are collected. For example, those marked with stars are collected from \url{https://networkrepository.com/} \cite{nr} and those marked with circles are collected from \url{http://snap.stanford.edu/} \cite{snapnets}. Please note that all networks used in the following experiments are converted to undirected and unweighted graphs, with no self-loops or isolated nodes. Due to the space limitation, more detailed information of these networks could be achieved on the mentioned websites.

\begin{table}[thbp]
  \caption{Basic properties of mentioned networks containing the number of nodes $|V|$, the number of edges $|E|$, the maximal value of $k$-core $k_{max}$ and the average degree $d_{avg}$.}
  \label{tab: dataset}
  \begin{tabular*}{\hsize}{@{}@{\extracolsep{\fill}}ll|cccc@{}}
    \bottomrule[0.5mm]
    \multicolumn{2}{c}{Network}                                & $|V|$   & $|E|$   & $k_{max}$  & $d_{avg}$ \\\hline
    \multirow{3}{*}{SN}                     & TVShow$^\star$   & 3892    & 17239   & 56         & 8.8587    \\
                                            & LastFM$^\circ$   & 7624    & 27806   & 20         & 7.2943    \\
                                            & Facebook$^\circ$ & 22470   & 170823  & 56         & 15.2045   \\
                                            & DeezerEU$^\circ$ & 28281   & 92752   & 12         & 6.5593    \\
                                            & Gowalla$^\circ$  & 196591  & 950327  & 51         & 9.6681    \\ \hline
    \multirow{3}{*}{CN}                     & HepPh$^\star$    & 12006   & 118489  & 238        & 19.7383   \\
                                            & AstroPh$^\star$  & 18771   & 198050  & 56         & 21.1017   \\
                                            & CondMat$^\star$  & 21363   & 91286   & 25         & 8.5462    \\
                                            & Citeseer$^\star$ & 227320  & 814134  & 86         & 7.1629    \\ \hline
    \multirow{3}{*}{IN}                     & USAir$^\star$    & 332     & 2126    & 26         & 12.8072   \\
                                            & USPower$^\star$  & 4941    & 6594    & 5          & 2.6691    \\
                                            & RoadNet$^\star$  & 1965206 & 2766607 & 3          & 2.8156    \\ \hline
    \multirow{3}{*}{WN}                     & EDU$^\star$      & 3031    & 6474    & 29         & 4.2719    \\
                                            & Indo$^\star$     & 11358   & 47606   & 49         & 8.3828    \\
                                            & Arabic$^\star$   & 163598  & 1747269 & 101        & 21.3605   \\
                                            & Google$^\circ$   & 875713  & 4322051 & 44         & 9.8709    \\
    \toprule[0.5mm]
  \end{tabular*}
\end{table}

\subsection{Baselines}
Given that we are the first work to study the TNCP problem, there is no ready-made method that can be used as a comparison experiment. For this reason, we design two random-based baseline methods and adjust two existing algorithms which are originally proposed to solve the $k$-core minimization problem. Their details are shown as follows.

\begin{itemize}[itemsep= 0 pt, topsep = 0 pt, itemindent = 0 pt, leftmargin = 15 pt]
  \item \textbf{Random Edge Deletion (RED)} arbitrarily selects an edge within $E$ to remove and then updates the core values of nodes within $V$. These two steps will be performed iteratively until the target node collapses successfully.
  \item \textbf{Random Neighbor Disconnection (RND)} arbitrarily removes an edge connected to the target node and then updates the core values of nodes within $V$. These two steps will be performed iteratively until the target node collapses successfully.
  \item \textbf{KNM} was proposed by \cite{chen2021edge} as a solution to the $k$-core minimization problem. It works by iteratively removing the edge whose detachment will lead to the maximal number of nodes who collapse from $G_k$. This process continues until the perturbation budget is reached or $G_k=\emptyset$. In this paper, we adapt the termination condition of KNM algorithm to the collapse of the target node.
  \item \textbf{SV} was proposed by \cite{ijcai2020p480} for covering the $k$-core minimization problem which exploits the shapley value, a cooperative game-theoretic concept. It assigns weights to the candidate edges and then chooses the top $b$ edges to remove. In this paper, considering the consumption of time, we set $E_{k\setminus k+1}$ as the candidate edges instead of $E_k$ which is originally used by \cite{ijcai2020p480}, and we set the hyperparameter $\epsilon^2=0.1$. Then we remove candidate edges one by one according to their weights until the target node collapses without the budget limitation of $b$.
\end{itemize}




In order to evaluate the transferability not only among various networks but also among various individual nodes, we will apply all baseline methods as well as our proposed algorithms on each node within every network to achieve its node robustness. Then we will evaluate the effectiveness of these algorithms by several global metrics which will be introduced in Section \ref{sec: metrics}. Additionally, it is necessary to be noted that both of RED and RND will be performed $10$ times independently on each node in order to reduce the randomness and the mean value is recorded as the robustness of each node.

\subsection{Metrics}
\label{sec: metrics}

We propose the following metrics, \emph{Number of Bubble Nodes} (NBN), \emph{Sum of Reduced Cost} (SRC), \emph{Weighted Average Reduction} (WAR), and \emph{Reduction Proportion} (RP) to evaluate the effectiveness of various methods.

\begin{itemize}[itemsep= 0 pt, topsep = 0 pt, itemindent = 0 pt, leftmargin = 15 pt]
  \item \textbf{NBN:} Through a particular algorithm, we are interested in how many bubble nodes can be explored from $G$. Thus, the total number of explored bubble nodes is recorded as NBN which is formulated as below:
  \begin{equation}
    \text{NBN} = |BN|.
  \end{equation}
 The higher NBN is, the more transferable the algorithm is among various nodes in a graph.

  \item \textbf{SRC:} For a bubble node $i$, the decrease between its core strength and node robustness is named as Reduced Cost which is quantified as $RC_{(i,G)}=CS_{(i,G)}-NR_{(i,G)}$. Therefore, the sum of reduced cost of all explored bubble nodes in $G$ could be formulated as below:
  \begin{equation}
    \text{SRC} = \sum_{i\in BN}{RC_{(i,G)}}.
  \end{equation}

  \item \textbf{WAR:} In order to illustrate the average cost reduction of explored bubble nodes in a network through some algorithm, we propose WAR which is formulated as below:
  \begin{equation}
    \text{WAR} = \frac{\sum\limits_{r\in \mathcal{U}}p_r^{-1}\cdot r}{\sum\limits_{r\in \mathcal{U}}p_r^{-1}}.
  \end{equation}
  where $\mathcal{U}$ contains the unique elements of $\{RC_{(i,G)}|i\in BN\}$ and $p_r = \frac{|\{i|i\in BN, RC_{(i,G)}=r\}|}{|BN|}$ denotes the probability of those nodes whose reduced cost equal to $r$ appearing in $BN$. And the reason why we do not use arithmetic average will be explained in Section \ref{sec: evaluation} with specific examples.
  
  \item \textbf{RP:} We are also interested in the reduction proportion of node robustness relative to core strength on all nodes in $G$ and propose RP for measuring, which is formulated as below:
  \begin{equation}
    \text{RP} = \frac{\text{SRC}}{\sum_{i\in V}CS_{(i,G)}}\times 100\%.
  \end{equation}
  In addition, RP can be used to describe the redundancy of core strength with respect to node robustness. The higher the RP, the more redundant the core strength.
\end{itemize}

\begin{table}[thbp]
  \caption{The experimental results of RED and RND. Those datasets that could not be covered within $10^5$ seconds are marked as /. Attention that for RED and RND, the robustness of each node is assigned as the average of the results achieved by $10$ independent experiments.}
  \label{tab: table1}
  \setlength{\tabcolsep}{0.5mm}
  \begin{tabular*}{\hsize}{@{}@{\extracolsep{\fill}}c|cccc|cccc@{}}
    \bottomrule[0.5mm]
    \multirow{2}{*}{Network} & \multicolumn{4}{c}{RED}           & \multicolumn{4}{c}{RND}            \\ \cline{2-9} 
    & NBN     & SRC    & WAR    & RP(\%) & NBN   & SRC      & WAR     & RP(\%) \\ \hline
    TVShow                   & 8   & 1.8   & 0.2    & 0.02   & 375   & 204.9   & 1.6894  & 2.75  \\
    LastFM                   & 23  & 20.1  & 1.2346 & 0.15   & 263   & 350.4   & 5.2813  & 2.7   \\
    Facebook                 & 9   & 4.7   & 0.4083 & 0.01   & 1346  & 1022.5  & 4.3603  & 2.0   \\
    DeezerEU                 & 1   & 0.4   & 0.4    & 0.001  & 648   & 324.8   & 2.3962  & 0.62  \\
    Gowalla                  & /   & /     & /      & /      & 5215  & 3015.2  & 4.1356  & 1.02  \\ \hline
    HepPh                    & 159 & 308.9 & 5.4106 & 1.51   & 1181  & 2773.4  & 13.3372 & 13.59 \\
    AstroPh                  & 94  & 90.2  & 2.1622 & 0.24   & 2234  & 4337.7  & 10.7611 & 11.48 \\
    CondMat                  & 13  & 2.4   & 0.2441 & 0.007  & 2439  & 2209.4  & 4.3676  & 6.05  \\
    Citeseer                 & /   & /     & /      & /      & 23285 & 20848.6 & 12.4333 & 6.37  \\ \hline
    USAir                    & 2   & 0.3   & 0.15   & 0.05   & 10    & 3.9     & 0.4714  & 0.66  \\
    USPower                  & 1   & 0.4   & 0.4    & 0.005  & 97    & 50.2    & 0.9693  & 0.62  \\
    RoadNet                  & /   & /     & /      & /      & /     & /       & /       & /     \\ \hline
    EDU                      & 3   & 0.4   & 0.1667 & 0.005  & 14    & 11.9    & 0.9455  & 0.16  \\
    Indo                     & 20  & 3.7   & 0.7049 & 0.02   & 754   & 703.4   & 3.8977  & 4.68  \\
    Arabic                   & 9   & 1.8   & 0.3032 & 0.0009 & 4510  & 5086.6  & 6.9982  & 2.42  \\
    Google                   & /   & /     & /      & /      & 53222 & 50569.3 & 18.2091 & 3.1   \\
    \toprule[0.5mm]
  \end{tabular*}
\end{table}

\begin{table*}[thbp]
  \caption{The experiment results of KNM, SV, TNC and ATNC. Those datasets that could not be completed by the method within $10^5$ seconds are marked as /. The best results are bolded and the second-best results are underlined.}
  \label{tab: table2}
  \setlength{\tabcolsep}{1mm}
  \begin{tabular*}{\hsize}{@{}@{\extracolsep{\fill}}c|cccc|cccc|cccc|cccc@{}}
    \bottomrule[0.5mm]
    \multirow{2}{*}{Network} & \multicolumn{4}{c}{KNM}        & \multicolumn{4}{c}{SV}                                  & \multicolumn{4}{c}{TNC}                                            & \multicolumn{4}{c}{ATNC}                                             \\ \cline{2-17} 
    & NBN  & SRC   & WAR    & RP(\%) & NBN       & SRC         & WAR            & RP(\%)       & NBN           & SRC             & WAR             & RP(\%)         & NBN            & SRC              & WAR             & RP(\%)         \\ \hline
    TVShow                   & 188  & 502  & 8.8894  & 6.73  & 142       & 403        & 8.0841          & 5.40         & \textbf{543}  & \textbf{1104}  & \textbf{10.4244} & \textbf{14.80} & {\ul 514}      & {\ul 1055}      & {\ul 10.4196}    & {\ul 14.14}    \\
    LastFM                   & 201  & 1025 & 17.0836 & 7.90  & 164       & 895        & 16.4166         & 6.90         & \textbf{458}  & \textbf{1652}  & \textbf{17.7163} & \textbf{12.74} & {\ul 404}      & {\ul 1485}      & {\ul 17.6044}    & {\ul 11.45}    \\
    Facebook                 & 828  & 5335 & 35.8553 & 10.42 & 618       & 3040       & 28.4149         & 5.90         & \textbf{2709} & \textbf{11691} & \textbf{40.5852} & \textbf{22.83} & {\ul 2468}     & {\ul 10033}     & {\ul 37.4845}    & {\ul 19.59}    \\
    DeezerEU                 & 182  & 606  & 11.3760 & 1.15  & 94        & 303        & 9.139           & 0.58         & \textbf{1162} & \textbf{2185}  & \textbf{15.4195} & \textbf{4.15}  & {\ul 1062}     & {\ul 1967}      & {\ul 15.1067}    & {\ul 3.74}     \\
    Gowalla                  & /    & /    & /       & /     & {\ul 828} & {\ul 9068} & {\ul 58.8298}   & {\ul 3.06}   & /             & /              & /                & /              & \textbf{7719}  & \textbf{22864}  & \textbf{62.2077} & \textbf{7.72}  \\ \hline
    HepPh                    & 562  & 3664 & 31.4897 & 17.96 & 481       & 3244       & 31.4435         & 15.90        & \textbf{1381} & \textbf{5142}  & \textbf{31.9080} & \textbf{25.19} & {\ul 1332}     & {\ul 5026}      & {\ul 31.7788}    & {\ul 24.63}    \\
    AstroPh                  & 1276 & 7702 & 30.2089 & 20.38 & 816       & 5191       & 29.0587         & 13.73        & \textbf{2984} & \textbf{12637} & \textbf{32.9459} & \textbf{33.43} & {\ul 2727}     & {\ul 11382}     & {\ul 31.6881}    & {\ul 30.11}    \\
    CondMat                  & 259  & 797  & 12.2096 & 2.18  & 184       & 574        & 10.3739         & 1.57         & \textbf{3008} & \textbf{6760}  & \textbf{13.1719} & \textbf{18.50} & {\ul 2801}     & {\ul 6231}      & {\ul 13.0097}    & {\ul 17.05}    \\
    Citeseer                 & /    & /    & /       & /     & {\ul 514} & {\ul 1884} & {\ul 18.377}    & {\ul 0.58}   & /             & /              & /                & /              & \textbf{25244} & \textbf{50203}  & \textbf{29.3033} & \textbf{15.34} \\ \hline
    USAir                    & 30   & 111  & 3.8897  & 18.91 & 21        & 59         & 2.6262          & 10.05        & \textbf{36}   & \textbf{120}   & \textbf{3.9747}  & \textbf{20.44} & \textbf{36}    & \textbf{120}    & \textbf{3.9747}  & \textbf{20.44} \\
    USPower                  & 21   & 27   & 2.7141  & 0.33  & 16        & 20         & 2.5852          & 0.25         & \textbf{109}  & \textbf{130}   & \textbf{2.9294}  & \textbf{1.60}  & \textbf{109}   & {\ul 127}       & {\ul 2.9212}     & {\ul 1.56}     \\
    RoadNet                  & /    & /    & /       & /     & {\ul 19}  & {\ul 21}   & \textbf{2.8945} & {\ul 0.0006} & /             & /              & /                & /              & \textbf{4105}  & \textbf{4175}   & {\ul 2.8658}     & \textbf{0.11}  \\ \hline
    EDU                      & 11   & 16   & 2.5805  & 0.21  & 12        & 16         & 2.5516          & 0.21         & \textbf{59}   & \textbf{67}    & \textbf{2.8268}  & \textbf{0.89}  & \textbf{59}    & \textbf{67}     & \textbf{2.8268}  & \textbf{0.89}  \\
    Indo                     & 73   & 157  & 9.4433  & 1.04  & 69        & 140        & 9.1799          & 0.93         & \textbf{779}  & \textbf{1212}  & {\ul 12.4205}    & \textbf{8.06}  & {\ul 772}      & {\ul 1177}      & \textbf{12.6278} & {\ul 7.83}     \\
    Arabic                   & 354  & 550  & 7.3287  & 0.26  & 214       & 486        & 8.8685          & 0.23         & \textbf{5088} & \textbf{8232}  & {\ul 14.4142}    & \textbf{3.92}  & {\ul 5001}     & {\ul 8049}      & \textbf{14.423}  & {\ul 3.83}     \\
    Google                   & /    & /    & /       & /     & {\ul 868} & {\ul 4309} & {\ul 35.6256}   & {\ul 0.26}   & /             & /              & /                & /              & \textbf{77928} & \textbf{222404} & \textbf{75.7784} & \textbf{13.63}\\
    \toprule[0.5mm]
  \end{tabular*}
\end{table*}

\begin{figure*}[t]
  \centering
	\includegraphics[width=\linewidth]{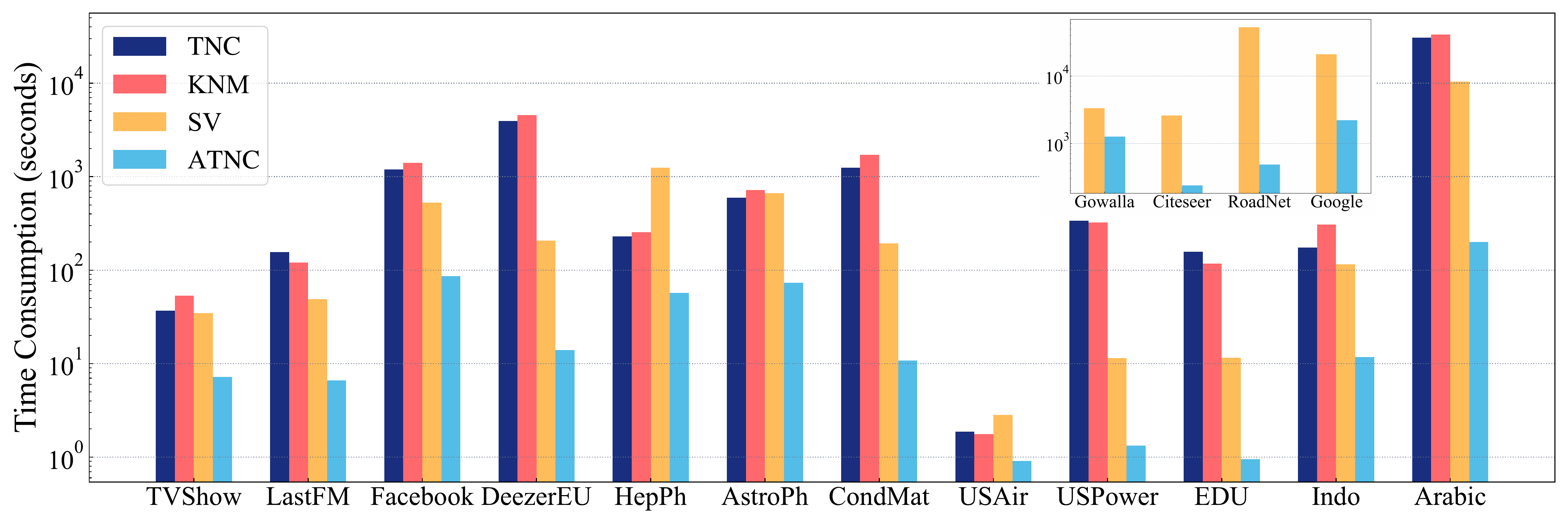}
  \vspace{-5mm}
  \caption{Comparisons of the running time of TNC, KNM, SV and ATNC. The subfigure exhibits the running time of SV and ATNC on those large-scale networks separately. We can see that ATNC is significantly more efficient than the other methods on the time consumption.}
  \label{fig: time_all}
\end{figure*}

\section{Results and Analyses}
\label{sec: results}


The experimental results are exhibited in Table \ref{tab: table1} and Table \ref{tab: table2}, which contrastively shows the performance of TNC and ATNC, compared with $4$ baseline methods on $16$ real-world networks mentioned before. Meanwhile, the detailed comparisons and analyses are presented as follows.

\subsection{Performance Evaluation}
\label{sec: evaluation}


\textbf{Comparisons Among Baselines.} Let us concentrate on Table \ref{tab: table1} in which the experimental results of RED and RND are exhibited. Notice that the robustness of each node in networks is achieved by the average of $10$ independent experimental results. From the table, it is easy to find that the NBN of RED is far fewer than that of RND on all networks which represents that RED is unable to explore bubble nodes and fails to cover the TNCP problem. On the contrary, RND performs much better than RED on all used metrics which demonstrates that the strategy of adjacent search for candidate reduction is helpful for covering the TNCP problem. Additionally, we can realize that RED is not able to complete the search missions on the $4$ networks whose number of nodes is more than $10^5$, i.e. Gowalla, Citeseer, RoadNet and Google, while RND only fails on RoadNet, a network with millions of nodes. It also proves that the strategy of adjacent search can effectively reduce the time complexity of the algorithm.

After that, let us turn our sights to the experimental results achieved by KNM and SV which are illustrated in Table \ref{tab: table2}. Neither KNM nor SV displays powerful transferability among different nodes in a network compared to RND. For instance, RND detects $2439$ bubble nodes on CondMat network, whereas this number is $259$ and $184$ induced by KNM and SV, respectively, which reveals a difference of almost $10$ times. Similarly, RND is able to filter out $4510$ bubble nodes on Arabic network, while KNM and SV could only find $354$ and $214$ nodes. However, the other metrics, i.e., SRC, WAR and RP, are much higher for KNM and SV compared to those for RND. For example, on Facebook network, the SRC of KNM is $5$ times larger than that of RND and on AstroPh network, the WAR of KNM is $3$ times larger than that of RND. These results tell us that the heuristic methods enable the target node to collapse at a lower budget compared to the random-based methods, although they can only work on part of bubble nodes. Analysis from the principle of these two algorithms, neither of them exploits the information associated with the target node to guide the removal of edges which leads to the unsatisfied performance on solving the TNCP problem.

\begin{figure*}[t]
	\centering
	\subfigure[LastFM, Node 6101]{
	\includegraphics[scale=0.29]{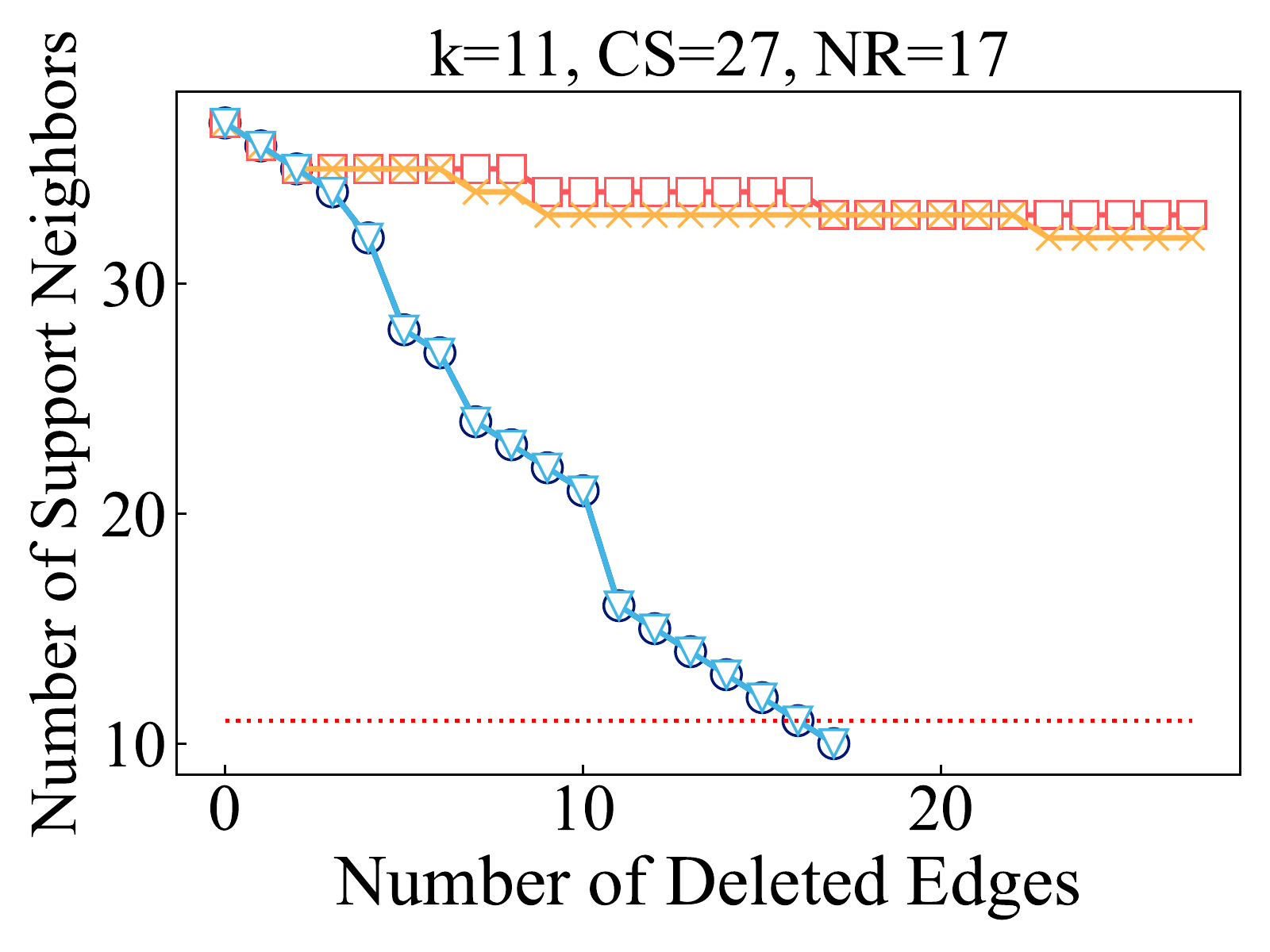}\label{fig: case_a}}
	\hspace{-2.5mm}
	\subfigure[LastFM, Node 3103]{
	\includegraphics[scale=0.29]{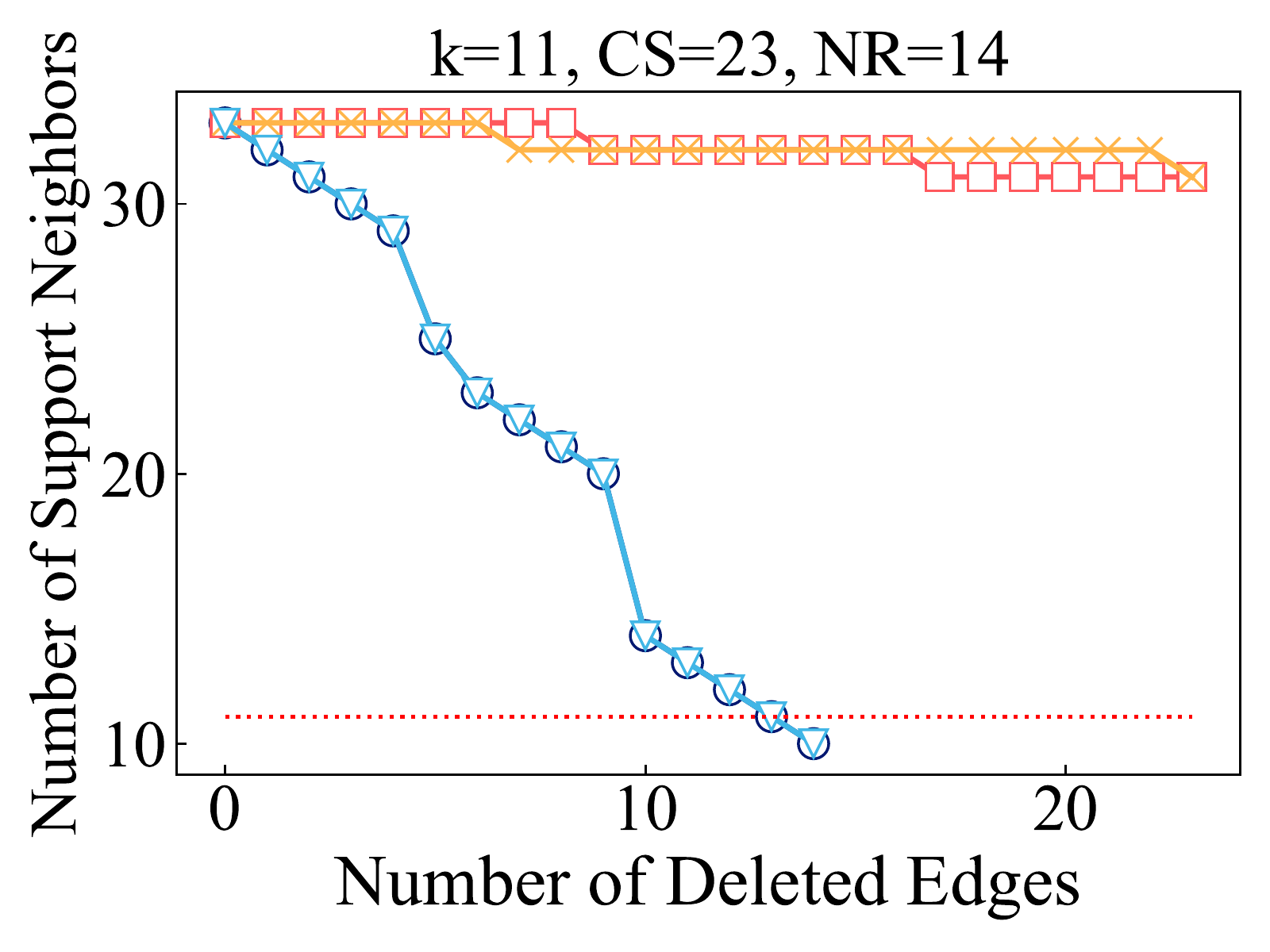}\label{fig: case_b}}
	\hspace{-2.5mm}
  \subfigure[DeezerEU, Node 17963]{
	\includegraphics[scale=0.29]{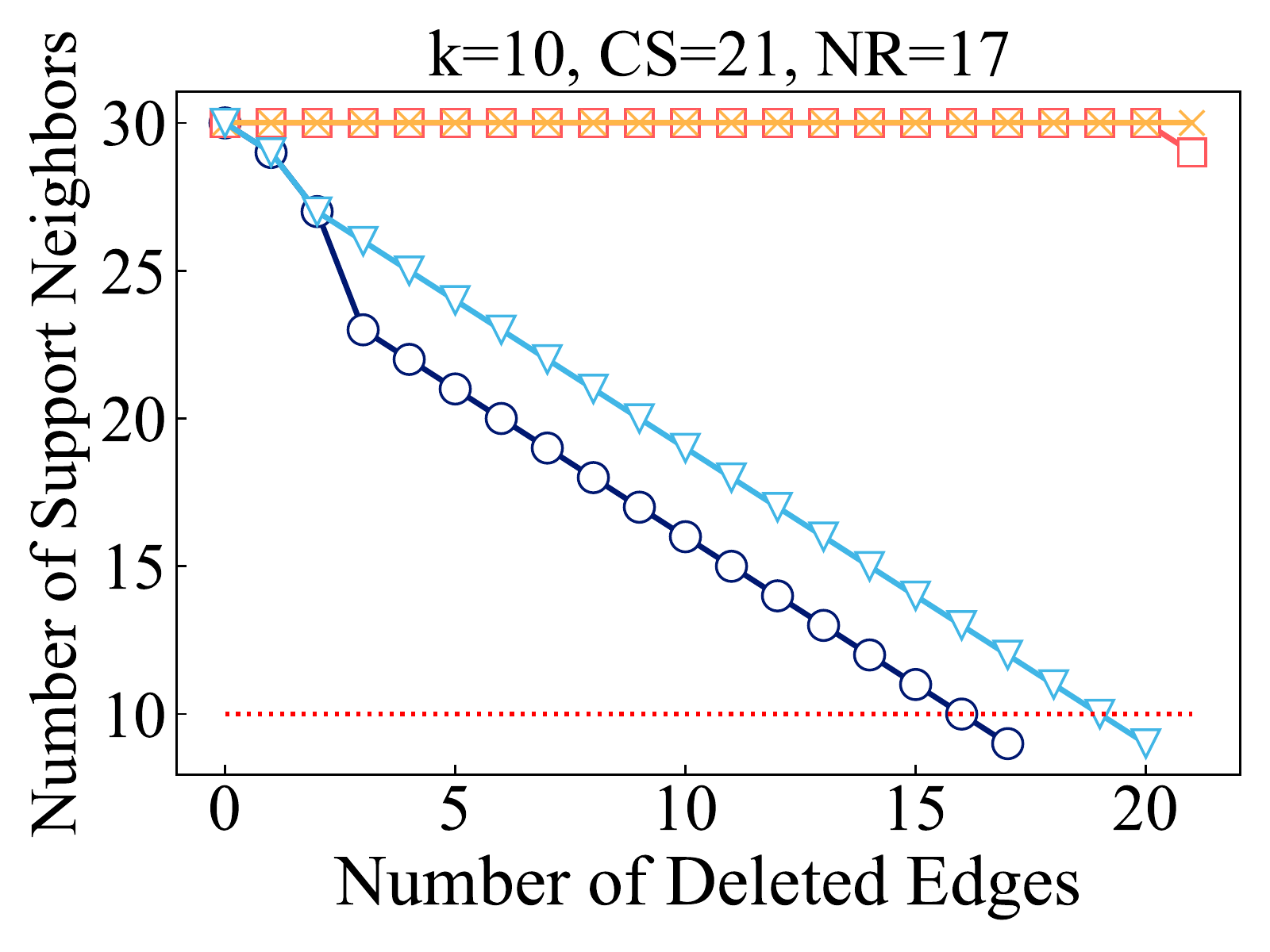}\label{fig: case_c}}
	\hspace{-2.5mm}
  \subfigure[DeezerEU, Node 24062]{
	\includegraphics[scale=0.29]{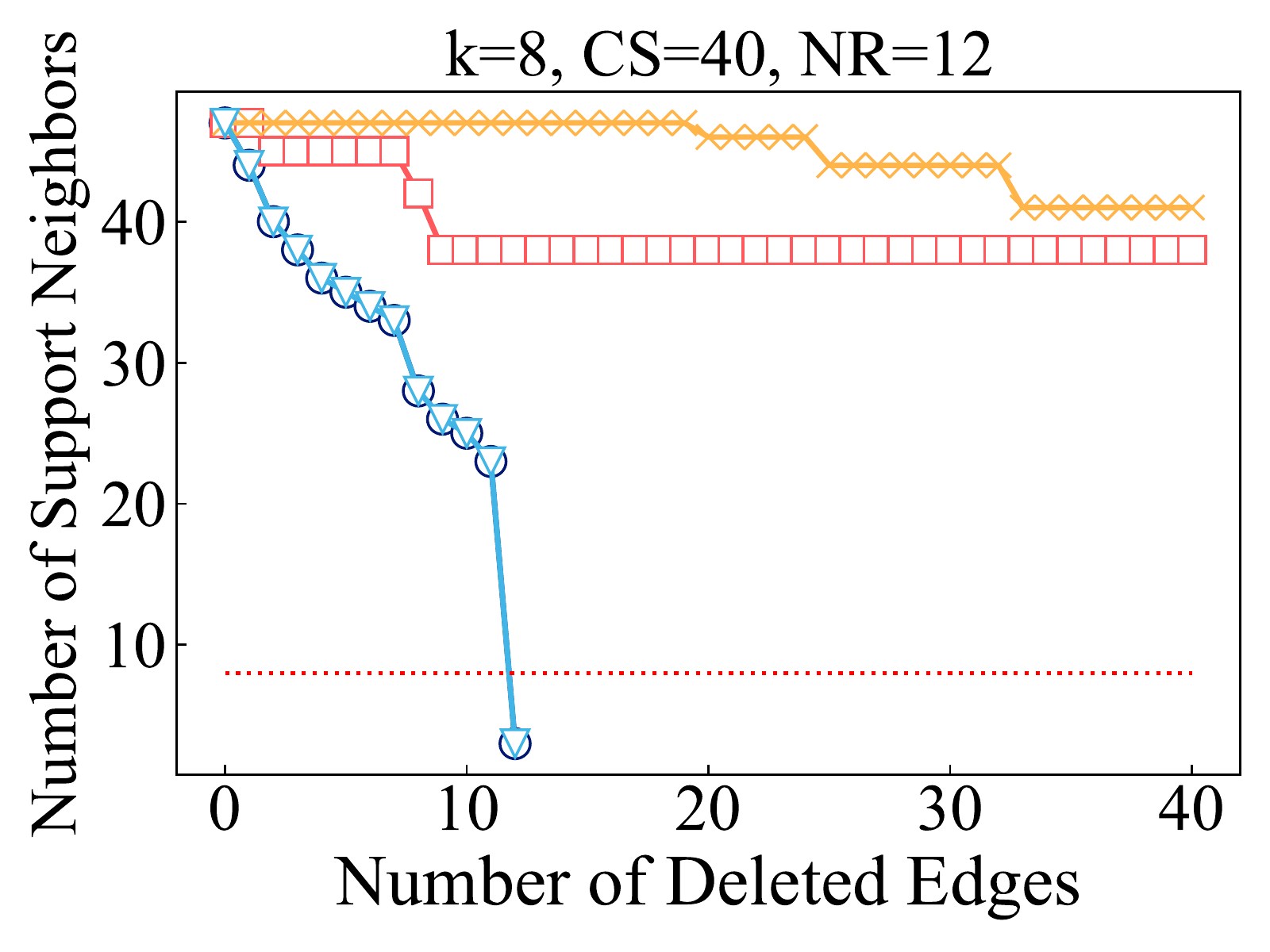}\label{fig: case_d}}\\
  
	\subfigure[CondMat, Node 1233]{
	\includegraphics[scale=0.29]{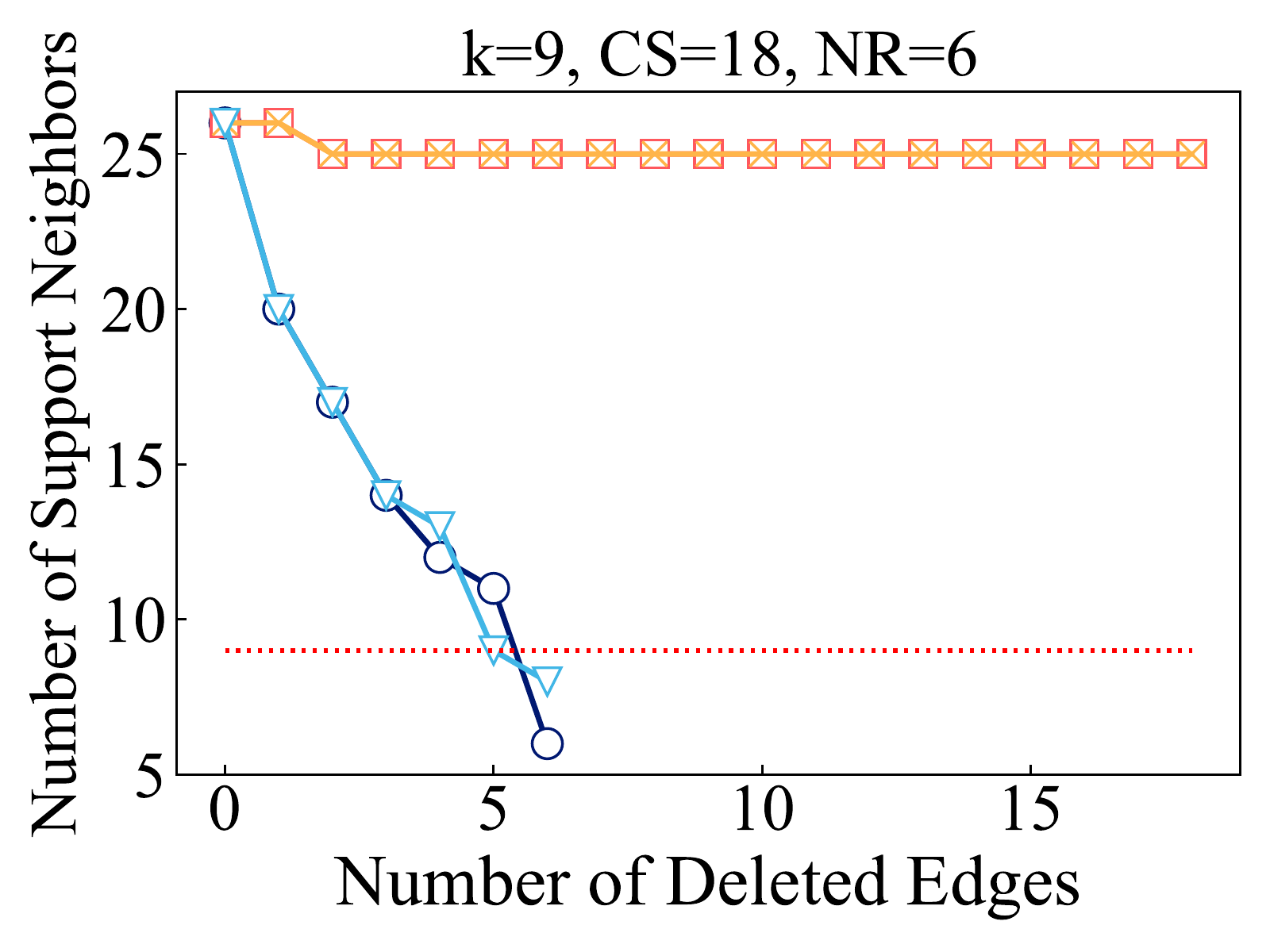}\label{fig: case_e}}
	\hspace{-2.5mm}
	\subfigure[CondMat, Node 13621]{
	\includegraphics[scale=0.29]{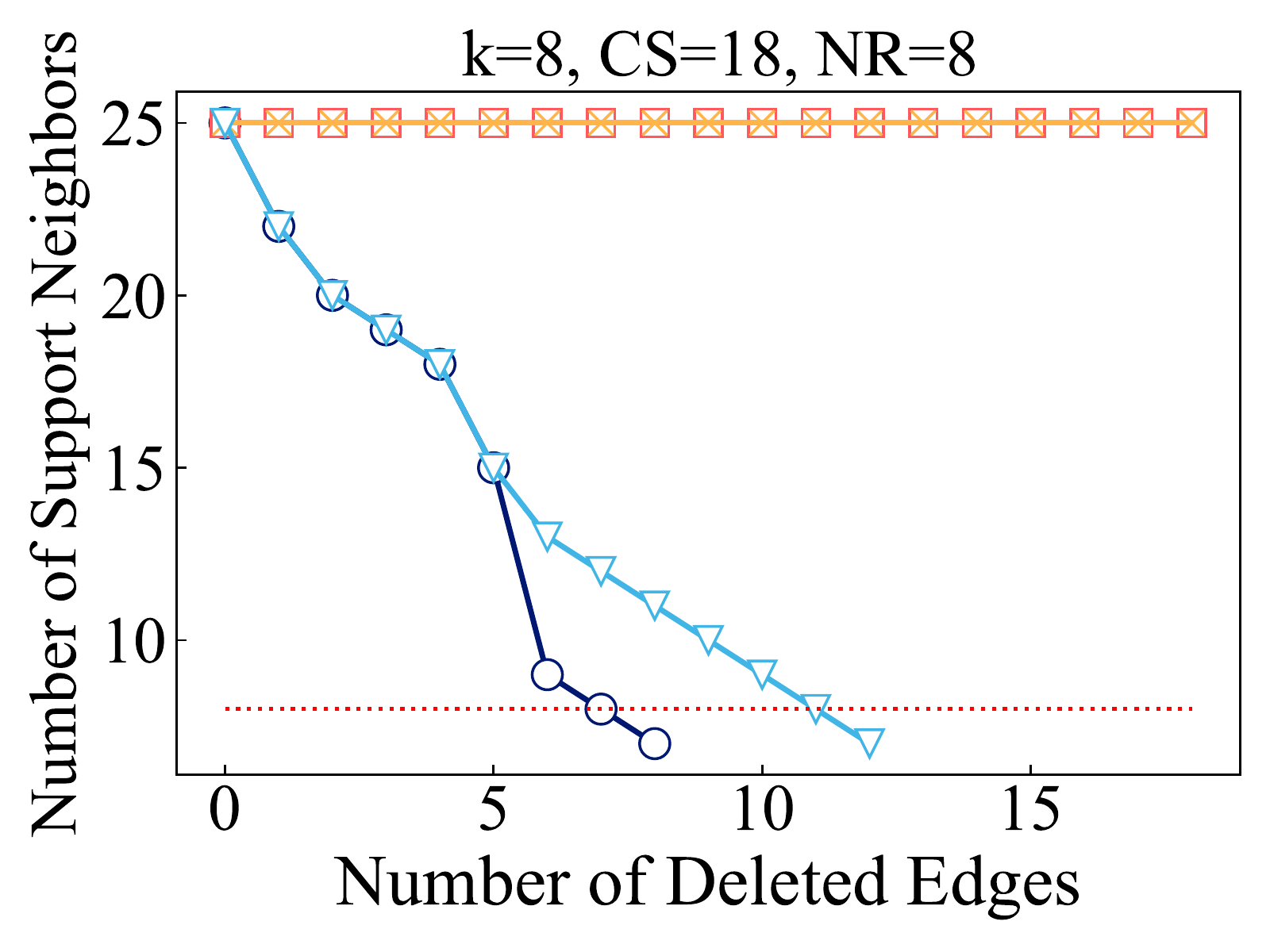}\label{fig: case_f}}
	\hspace{-2.5mm}
  \subfigure[Indo, Node 2721]{
	\includegraphics[scale=0.29]{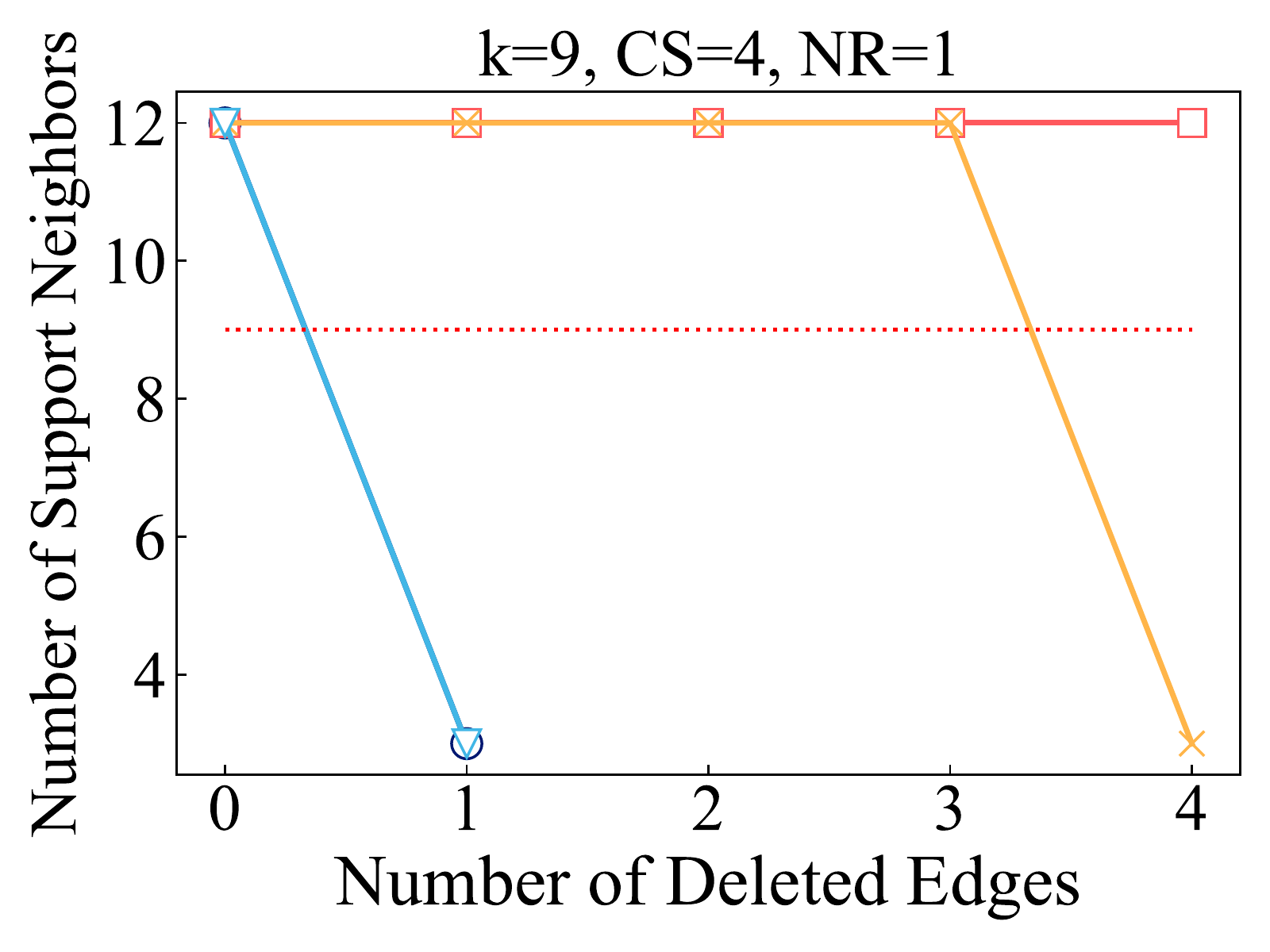}\label{fig: case_g}}
	\hspace{-2.5mm}
  \subfigure[Indo, Node 4712]{
	\includegraphics[scale=0.29]{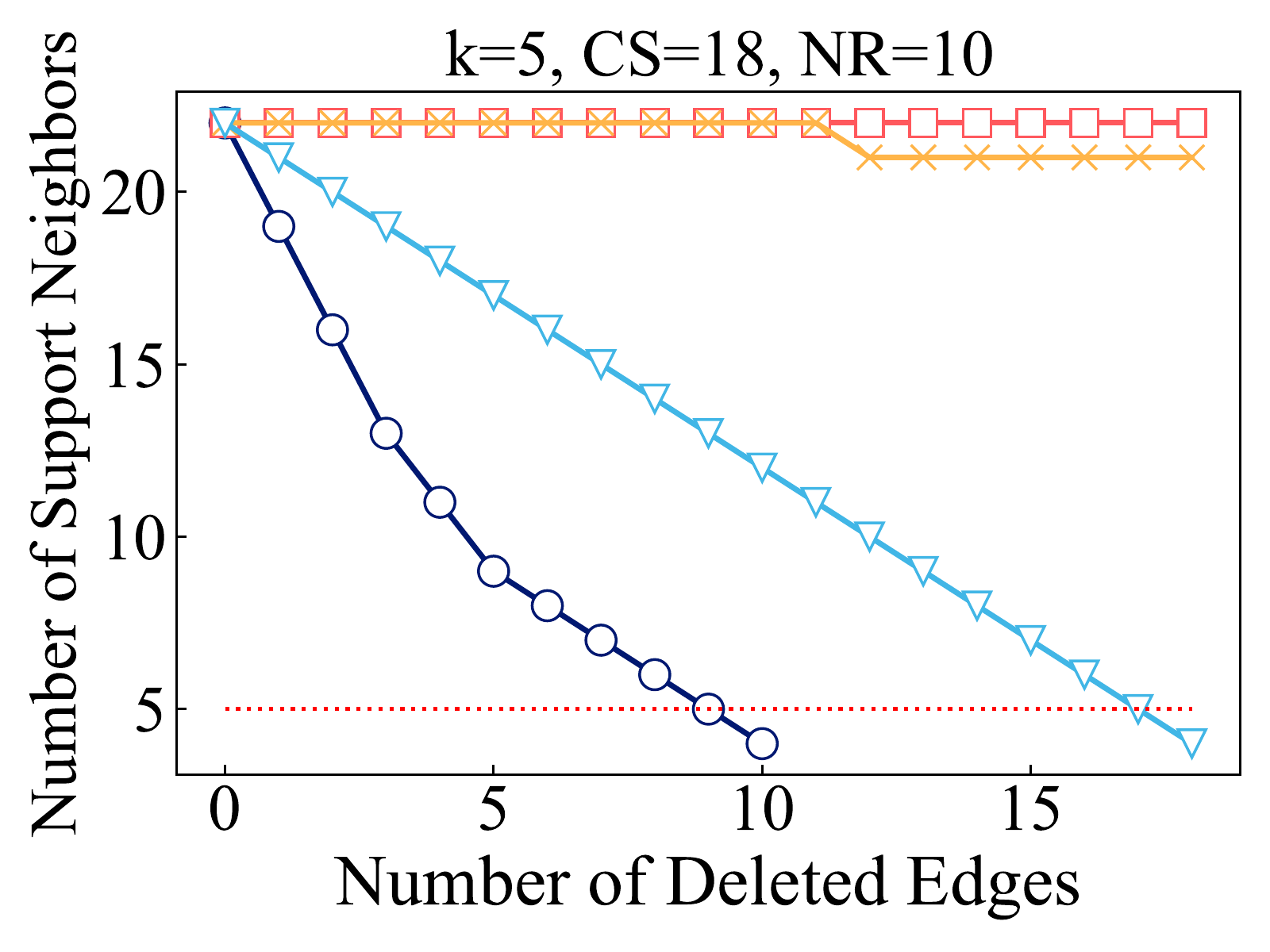}\label{fig: case_h}}\\
	\vspace{-1mm}
  \subfigure{
	\includegraphics[scale=0.29]{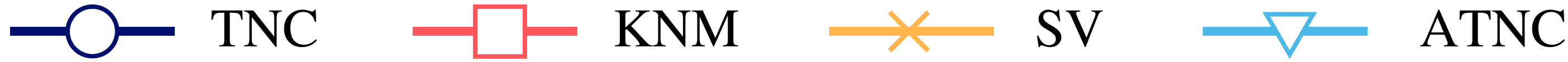}}
	\vspace{-2.5mm}
	\caption{Case study on individual nodes from $4$ mentioned networks operated by TNC, KNM, SV and ATNC. Each method is marked with a unique label and the red dotted line in each subfigure indicates the critical value of the number of supportive neighbors for current target node.}
	\label{fig: casestudy}
\end{figure*}

\textbf{Benefits of Our Proposed Methods.} Next, turning to the results generated by TNC and ATNC shown in Table \ref{tab: table2}, TNC and ATNC achieve the best and second-best performance on majority of the datasets with significant benefits over KNM and SV. For example, on CondMat network, only $259$ and $184$ bubble nodes could be detected through KNM and SV, respectively, while there are $3008$ and $2801$ bubble nodes found by TNC and ATNC, respectively, resulting in a difference of more than $10$-fold between the two sides. This definitely demonstrates that in the comparison to KNM and SV, TNC and ATNC have stronger transferability across different nodes and different networks. Besides, our proposed algorithms also perform better than KNM and SV considering SRC, WAR and RP metrics. However, we notice that the WAR of SV is a little larger than that of ATNC on RoadNet. After the observation of the bubble nodes found by SV and ATNC, there exists the situation that among the $19$ bubble nodes detected by SV, $1$ node has reduced cost of $3$ and $18$ nodes has reduced cost of $1$; while for the $4105$ bubble nodes detected by ATNC, there are $8$ nodes with reduced cost of $3$, $54$ nodes with $2$ and even $4043$ nodes with $1$. In the calculation of WAR for ATNC, the bubble nodes with reduced cost of $1$, which make up nearly $98\%$ of the total, surely have a significant diluting impact on the final result. Actually, the existence of bubble nodes with low reduced cost is common in the other networks. For instance, on Facebook network, about $65\%$ of the bubble nodes detected by ATNC have their reduced cost less than $4$ while there are $30$ nodes with reduced cost larger than $30$, and on Indo network, $90\%$ of the bubble nodes detected by ATNC have their reduced cost less than $3$ with $4$ nodes whose reduced cost larger than $10$. This is why we use weighted averaging instead of arithmetic averaging to quantify the average reduced cost of each bubble node in the network.

\textbf{Comparison between TNC and ATNC.} Reviewing what is discussed in Section \ref{sec: methodology}, it is easy to be realized that the candidates waiting to be filtered of ATNC is a subset of those of TNC. Unsurprisingly, considering the comparison between TNC and ATNC in Table \ref{tab: table2}, the performance of TNC is better than that of ATNC on the majority of networks. We also notice that on Indo and Arabic, the WAR of ATNC is slightly higher than that of TNC while the other metrics of ATNC are less than those of TNC. Taking Indo as example for analysis, we find that there are $3$ nodes with reduced cost of $8$ among the $779$ bubble detected by TNC while none of these nodes with reduced cost of $8$ explored by ATNC. This situation leads to an unfair weighting process of TNC compared to ATNC in the calculation of WAR and causes the slight difference in the final results. For the similar reason, the slight variations in the number of bubble nodes with high reduced cost lead to the difference in the final result of WAR. However, the performance of TNC is completely superior to that of ATNC on the whole. Besides, it is easy to find that TNC is not suitable for those large-scale networks, e.g., Gowalla, Citeseer, RoadNet and Google, due to the huge size of the candidates. On the contrary, ATNC is able to complete these tasks and receives appreciable results. The detailed comparisons of efficiency will be discussed in the following contents.

\textbf{Redundancy of Core Strength Metric.} As introduced before, the RP metric measures the redundancy of core strength with respect to node robustness. As mentioned in Section \ref{sec: preliminaries}, we have shown that the core strength metric does not accurately quantify the number of necessarily removed edges for making the target $k$-node collapse. From the results of ATNC in Table \ref{tab: table2}, there are more than half of the networks whose RP is larger than $10\%$ and even part of them owning RP larger than $20\%$. For example, the RP of Facebook is nearby $20\%$ and the RP of AstroPh is more than $30\%$. These results undoubtedly demonstrate that the core strength metric is not suitable for measuring the least number of edges to remove for leading the collapse to a target node.

\textbf{Efficiency of Different Methods.} The visualization for the time consumption of implementing KNM, SV, TNC and ATNC across all the mentioned networks is illustrated in Figure \ref{fig: time_all}. Overall, we can find that TNC and KNM have similar performance since they both traverse all corona nodes for edge removal in each iteration. 
Then, we can find that SV performs better than KNM and TNC on most of the networks except for HepPh network and USAir network. For HepPh network, its maximal core value $k_{max}=238$ is much higher than that of the other networks which is up to $101$. For USAir network, its size if much smaller than the others and causes the operations of SV are much more time-consuming than those of KNM and TNC.
Besides, ATNC occupies the best efficiency with significant time-consumption reduction compared to the other methods. For example, on DeezerEU network, the time consumption of SV method is about $10$ times larger than that of ARPC and the time consumption of TNC is even more than $100$ times larger than that of ATNC. And for large-scale networks, e.g., Gowalla, Citeseer, RoadNet and Google, neither TNC nor KNM can calculate the robustness for each node in those networks in the limitation of $10^5$ seconds, e.g., TNC even fails to complete the calculation of $0.1\%$ of total nodes on Google network within $10^5$ seconds, while ATNC is able to cover the task in an appreciable amount of time.

In a word, our proposed methods TNC and ATNC have significant advantages over the other baseline methods. 
And considering the much lower time complexity of ATNC compared to TNC, ATNC is more suitable to be deployed on large-scale networks for solving TNCP problem, although the effect of TNC is slightly better than that of ATNC.

  

\subsection{Case Study}
\label{sec: casestudy}

In the previous section, we provide the performance of different methods from a macroscopic perspective. Here, in this part, we offer a microscopic point of view as a case study. We visualize the variation in the number of supportive neighbors of the target node when the implementation is processing. As illustrated in Figure \ref{fig: casestudy}, $8$ individual target nodes collected from $4$ of the mentioned networks are visualized. In each subfigure, the horizontal coordinate indicates the number of removed edges during the process, the vertical coordinate indicates the number of remaining supportive neighbors of the target node after the removal. Different implemented methods are marked with different labels. Meanwhile, the red dotted line in each subfigure represents the critical number of supportive neighbors for the target node which is equal to its core value. The collapse of the target node happens when the curve drops below the red dotted line since the violation of Theorem \ref{the: coresupport}. From the examples, it is clear that fewer removed edges is needed through TNC and ATNC compared to those of KNM and SV, and TNC is able to remove fewer edges than ATNC in some cases.

\section{Application}
\label{sec: application}
Currently, $k$-core has been widely used in numerous downstream tasks, e.g., anomaly detection \cite{shin2016corescope, shin2018patterns}, community detection \cite{5991344}, detection of influential spreaders \cite{kitsak2010identification,brown2011measuring,lu2016vital,lu2016h}, etc. Laishram et al.\cite{laishram2018measuring} demonstrated that the performance of those downstream tasks is highly relative to the resilience of the $k$-core structure in a network. They proposed a heuristic metric named CIS whose calculation is based on the core strength metric. They indicated that the resilience of $k$-core is positively correlated with CIS. However, as mentioned before, we have demonstrated that the core strength metric is highly redundant for measuring the robustness of individual $k$-nodes in real-world networks. Thus, the CIS calculated from core strength, named as CS-based CIS, probably overestimates the resilience of $k$-core structures in a network. For the above reasons, we replace core strength with the node robustness achieved by ATNC algorithm in the calculation of CIS, which is named as NR-based CIS. The results of CS-based CIS and NR-based CIS on real-world networks are shown in Figure \ref{fig: application}. It is clear that on most networks, NR-based CIS is much smaller than CS-based CIS and is able to precisely measure the resilience of the $k$-core in a network. Besides, combining the information illustrated in Table \ref{tab: table2}, we can find that the difference between NR-based CIS and CS-based CIS is proportional to the RP metric, e.g., on Facebook network, ATNC provides RP=$19.59\%$ and there is a two-fold difference between CS-based CIS and NR-based CIS; while the difference on EDU network who receives RP=$0.89\%$ by ATNC is negligible. From this, it is clear that the node robustness metric has better performance, compared to the core strength metric, in precisely describing the resilience of $k$-core structures in networks.

\begin{figure}[!t]
  \centering
	\includegraphics[width=\linewidth]{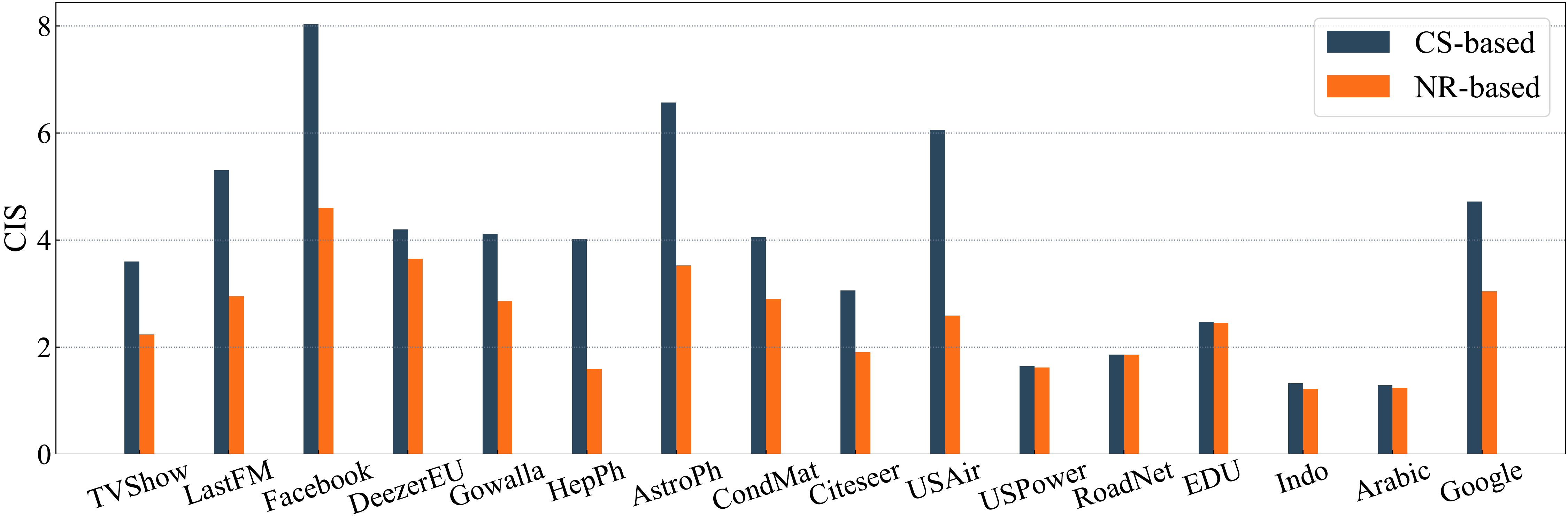}
  \vspace{-6mm}
  \caption{Comparisons of CS-based CIS and NR-based CIS on all mentioned networks. It is clear that, on most networks, NR-based CIS is able to measure the resilience of $k$-core more precisely than CS-based CIS.}
  \label{fig: application}
\end{figure}


\section{Conclusion}
\label{sec: conclusion}
In this paper, we engage in the first work on studying the robustness of individual nodes within $k$-core. We propose the TNCP problem, which aims to remove the minimal number of edges for making the target node collapse, and we also provide a proof of its NP-hardness. In order to solve TNCP problem, we propose two heuristic algorithms including TNC algorithm which exploits corona nodes to improve search efficiency, and ATNC algorithm which introduces adjacent-search strategy to further lower down computational complexity on large-scale networks. Extensive experimental results on various real-world networks, together with thorough analyses, demonstrate the superiority of our proposed methods over the baseline methods. Meanwhile, we offer the detailed processes of different algorithms being implemented on various target nodes for case study. Finally, we demonstrate that studying TNCP problem is helpful for precisely estimating the resilience of $k$-core in networks.

\begin{acks}

This work was supported in part by the Key R\&D Program of Zhejiang under Grant 2022C01018, by the National Natural Science Foundation of China under Grants 61973273 and U21B2001, by the National Key R\&D Program of China under Grant 2020YFB1006104, and by The Major Key Project of PCL under Grants PCL2022A03, PCL2021A02, and PCL2021A09.
 
\end{acks}


\bibliographystyle{ACM-Reference-Format}
\balance
\bibliography{references}

\end{document}